\documentclass[11pt]{article}
\usepackage[letterpaper]{geometry}
\usepackage{graphicx} 
\usepackage{hyperref}
\usepackage{xcolor}
\usepackage{mathrsfs}
\usepackage{amsthm}
\usepackage{amsmath}
\usepackage{amssymb}
\usepackage{natbib}
\usepackage{mathtools}
\usepackage{geometry}
\usepackage{listings}
\usepackage{xparse}
\usepackage{setspace}
\usepackage{algorithmicx}
\usepackage{algpseudocode}
\usepackage{booktabs}
\usepackage{algorithm}
\usepackage{authblk}
\usepackage{mathabx}
\usepackage[english]{babel}
\newtheorem{theorem}{Theorem}
\theoremstyle{remark}

\newtheorem{corollary}{Corollary}

\newtheorem{lemma}{Lemma}

\title{ Coordinate Descent Algorithm for Least Absolute Deviations Regression}

\author{
    Zehaan Naik\thanks{Department of Mathematics and Statistics, 
    Indian Institute of Technology Kanpur, India. 
    
    Email: zehaan.naik@gmail.com}
    \and
    Debasis Kundu\thanks{Department of Mathematics and Statistics, 
    Indian Institute of Technology Kanpur, India. 
    
    Email: kundu@iitk.ac.in}
}
\date{\today} 

\begin{document} 
\onehalfspacing

\maketitle 

\begin{abstract}
Least Absolute Deviations (LAD) regression provides a robust alternative to ordinary least squares by minimizing the sum of absolute residuals. However, its widespread use has been limited by the computational cost of existing solvers, particularly simplex-based methods in high-dimensional settings. We propose a coordinate descent algorithm for LAD regression that avoids matrix inversion, naturally accommodates the non-differentiability of the objective function, and remains well-defined even when the number of predictors exceeds the number of observations. The key observation is that each coordinate update reduces to a one-dimensional minimization admitting a closed-form solution given by a median or weighted median. The resulting algorithm has per-iteration complexity \(O(p\,n \log n)\) and is provably convergent due to the convexity of the LAD objective and the exactness of each coordinate update. Experiments on synthetic and real datasets show that the method matches the accuracy of linear-programming-based LAD solvers while offering improved scalability and stability in high-dimensional regimes, including cases where \(p \ge n\). The method is easy to implement, requires no specialized optimization software, and provides a practical tool for robust linear models.
\end{abstract}
\noindent\textbf{Keywords:} LAD regression, coordinate descent, robust modeling, high-dimensional statistics, simplex, OLS

\section{Introduction}
Linear regression is a central tool in statistical modeling, used to relate a response variable to a set of covariates through a linear predictor. The most widely used estimator in this setting is Ordinary Least Squares (OLS), which minimizes the sum of squared residuals and admits a closed-form solution under mild conditions. Under the classical linear model assumptions (most notably homoscedastic errors with finite second moments), OLS enjoys optimality properties, as guaranteed by the Gauss-Markov theorem, yielding the minimum-variance linear unbiased estimator. However, because squared residuals place increasing weight on large deviations, the OLS estimator can be highly sensitive to outliers or heavy-tailed noise. This sensitivity motivates the study of alternative estimators that retain the interpretability of linear models while providing improved robustness.

A natural robust alternative to OLS is the Least Absolute Deviations (LAD) estimator, which minimizes the sum of absolute residuals and corresponds to median regression. By penalizing deviations linearly rather than quadratically, LAD reduces the influence of extreme observations and remains well-behaved under heavy-tailed error distributions. Despite these robustness properties, LAD regression has historically been approached primarily through linear programming formulations. Classical methods solve the resulting optimization problem using simplex or interior-point algorithms, as in the foundational work of \cite{charnes1955}, \cite{BarrodaleRoberts1973}, and \cite{koenker1978}. While these approaches guarantee global optimality, they rely on specialized linear programming solvers and can become computationally expensive as the problem dimension grows \citep{koenker2005, hastie2009}. Moreover, in high-dimensional or rank-deficient settings, particularly when the number of predictors is comparable to or exceeds the number of observations, LP formulations may suffer from numerical ill-conditioning or reduced reliability, limiting their practical scalability \citep{portnoy1997, koenker2005}.

To address these limitations, we propose a simple, scalable coordinate descent algorithm for LAD regression. We consider the standard linear regression model
\[
y_i = x_i^\top \beta + \varepsilon_i, \qquad i = 1, \dots, n,
\]
where $y_i \in \mathbb{R}$ is the response, $x_i \in \mathbb{R}^p$ is a vector of covariates (including an intercept when appropriate), and $\beta \in \mathbb{R}^p$ is an unknown coefficient vector. Let $X \in \mathbb{R}^{n\times p}$ denote the design matrix. This notation is used across the article. No parametric assumptions are imposed on the distribution of the error terms in implementing the proposed algorithm except $E(\varepsilon_i) = 0$. However, additional assumptions on the error distribution may be required when deriving asymptotic statistical properties of the estimator. The Least Absolute Deviations (LAD) estimator is defined as any solution to
\[
\widehat{\beta}_{\mathrm{LAD}}
=
\arg\min_{\beta \in \mathbb{R}^p}
\sum_{i=1}^n \bigl| y_i - x_i^\top \beta \bigr|,
\]
which corresponds to estimating the conditional median of $y_i$ given $x_i$. Although this objective is convex, it is generally nondifferentiable and may admit multiple minimizers, particularly in high-dimensional or rank-deficient settings.

To compute LAD estimators efficiently, we exploit the structure of the objective. Fixing all coordinates except $\beta_j$ yields the univariate subproblem
\[
\beta_j
\;\in\;
\arg\min_{t \in \mathbb{R}}
\sum_{i=1}^n \bigl| r_i^{(j)} - x_{ij} t \bigr|,
\qquad
r_i^{(j)} = y_i - \sum_{k \neq j} x_{ik}\beta_k,
\]
whose solution is given by a median or weighted median. This leads naturally to a coordinate descent scheme \citep{Tseng2001, Friedman2010} with exact updates (see Algorithm~\ref{alg:LAD_naive}). By maintaining an incremental residual, the resulting algorithm avoids matrix inversion and reduces the cost of a full coordinate sweep to \(O(p\,n \log n)\). Owing to the convexity of the LAD objective and the exactness of each coordinate update, the method is provably convergent to a global minimizer. Empirically, it remains stable in high-dimensional and underdetermined regimes, including cases where \(p \ge n\), and achieves predictive performance comparable to LP-based LAD solvers when those methods are applicable. We also provide proof of convergence of this method.

We assess the proposed method through synthetic experiments and real-world benchmarks commonly used in the robust and quantile regression literature, including the Boston Housing, Air Quality, and Concrete Compressive Strength datasets \citep{koenker1978, koenker2005, huber2009, yeh1998}. Across these settings, the coordinate descent algorithm attains predictive accuracy comparable to LP-based LAD solvers when applicable, while remaining stable in high-dimensional regimes where LP formulations become unreliable. In experiments with \(p \ge n\), ridge-initialized LAD-CD consistently improves upon unrefined initial estimates, with reductions in prediction error of up to 30-40\%. These results complement the theoretical convergence guarantees and demonstrate the practical effectiveness of the proposed approach.

The remainder of the paper is organized as follows. Section~\ref{sec:background} reviews the LAD regression problem and the univariate optimization results that underpin coordinate-wise updates. Section~\ref{sec:methodology} introduces the proposed coordinate descent algorithm, establishes its convergence properties, and presents computational optimizations and warm-start strategies. Section~\ref{sec:experiments} reports empirical results on synthetic and real-world datasets, including high-dimensional and outlier-contaminated settings. Section~\ref{sec:conclusion} concludes with a discussion of practical implications and directions for future work.

\section{Background}
\label{sec:background}

This section reviews the Least Absolute Deviations (LAD) regression problem and summarizes classical solution strategies that motivate our proposed approach. We focus on two fundamental univariate optimization problems whose solutions form the basis of the coordinate descent algorithm developed in subsequent sections.

\subsection{Minimizing Deviations from a Constant}

The first and most basic subproblem is to find a single constant $a \in \mathbb{R}$ that best represents a set of observations $\{y_1, \dots, y_n\}$ in the $L_1$ sense. This amounts to minimizing
\begin{equation}
\mathcal{L}(a) = \sum_{i=1}^{n} |y_i - a|.
\label{Eq: MeanAbsDiff}
\end{equation}

It is a classical result that any minimizer of~\eqref{Eq: MeanAbsDiff} is a sample median of the observations $\{y_i\}$. This follows from a subgradient argument. The subgradient of $|y_i - a|$ with respect to $a$ is $-\mathrm{sgn}(y_i - a)$, where $\mathrm{sgn}(\cdot)$ denotes the sign function. Consequently, the subdifferential of $\mathcal{L}(a)$ is
\[
\partial \mathcal{L}(a) = \sum_{i=1}^{n} -\mathrm{sgn}(y_i - a).
\]
A necessary and sufficient condition for optimality is that $0 \in \partial \mathcal{L}(a)$, which holds precisely when the number of observations greater than $a$ is at least as large as the number less than $a$, and vice versa. This characterizes the sample median \citep{koenker1978}.

\subsection{Minimizing Deviations for a No-Intercept Model}
\label{Sec: MinNoInt}

The second subproblem concerns the estimation of a single slope coefficient $b$ in a linear regression model without an intercept. The objective is
\begin{equation}
\mathcal{L}(b) = \sum_{i=1}^{n} |y_i - b x_i|.
\label{Eq: MeanAbsErr}
\end{equation}
Assuming $x_i \neq 0$, the objective can be rewritten as
\begin{equation}
\mathcal{L}(b) = \sum_{i=1}^{n} |x_i| \left| \frac{y_i}{x_i} - b \right|.
\end{equation}
(Observations with $x_i = 0$ contribute a constant term to the objective and do not affect the minimizer.)

This representation shows that minimizing~\eqref{Eq: MeanAbsErr} is equivalent to computing a weighted median of the ratios $\{y_i / x_i\}$, with corresponding weights $|x_i|$. The weighted median is defined as any value $b$ such that the total weight of ratios less than $b$ and greater than $b$ are both at least half of the total weight. This result follows directly from properties of quantile regression \citep{koenker1978}. Together, these two univariate optimization results form the foundation of the coordinate-wise updates used in our algorithm.

\subsection{Genetic Algorithms for High-Dimensional Parameter Estimation}
\label{sec:background:ga}

In highly underdetermined regression problems ($ p> n$), where the number of predictors exceeds the number of observations, classical optimization methods for LAD regression can be difficult to apply reliably. In such settings, it is common to consider derivative-free global optimization techniques that remain well-defined even in the absence of strong convexity or full-rank design matrices. Genetic algorithms (GAs) provide one such approach and are therefore useful both as a baseline and as a source of high-quality initializations for local optimization methods \citep{Holland1975, Goldberg1989}.

In the regression context, the goal is to solve
\[
\widehat{\beta} = \arg\min_{\beta \in \mathbb{R}^p} 
\mathcal{L}(\beta), \qquad 
\mathcal{L}(\beta) := \frac{1}{n}\sum_{i=1}^{n} \rho \bigl(y_i - x_i^\top \beta\bigr),
\]
where $\rho(\cdot)$ denotes a loss function, such as $\rho(u)=|u|$ for LAD regression.  

A genetic algorithm maintains a population
\(
\mathcal{P}^{(t)} = \{\beta^{(t)}_1,\dots,\beta^{(t)}_M\}
\)
of $M$ candidate solutions at generation $t$, each associated with a fitness value
\[
F(\beta) = - \mathcal{L}(\beta) - \lambda \|\beta_{-1}\|_2^2,
\]
where an optional $\ell_2$ penalty is included to stabilize the search in the $p>n$ regime by discouraging excessively large coefficients. Here $\beta_{-1}$ denotes the subvector of regression coefficients excluding the intercept term. At each generation, the population is updated through the repeated application of selection, crossover, and mutation operators:
\begin{itemize}
    \item Selection: parents are chosen with probability proportional to their fitness;
    \item Crossover: selected parents are recombined to produce offspring;
    \item Mutation: individual coordinates are randomly perturbed with a small probability.
\end{itemize}
The next generation $\mathcal{P}^{(t+1)}$ is formed by replacing low-fitness candidates with newly generated offspring, and the procedure continues until a stopping criterion is satisfied.

Because genetic algorithms do not rely on gradient information or matrix inversion, they remain applicable in underdetermined settings where the design matrix is rank-deficient. This generality, however, comes at a substantial computational cost: the per-generation complexity scales as $O(M n p)$, and without explicit diversity-preserving mechanisms, the population may converge prematurely to suboptimal solutions \citep{DeJong1975, Vose1999}.

Accordingly, GAs are most effective when combined with fast local refinement methods. In this work, we use a GA to provide an initial estimate, which is then refined using the proposed coordinate descent algorithm, thereby combining global exploration with efficient, provably convergent local optimization.

\section{Proposed Methodology}
\label{sec:methodology}

In this section, we present a coordinate descent algorithm for solving the Least Absolute Deviations (LAD) regression problem. We begin by describing a naive coordinate descent formulation, which highlights the structure of the LAD objective and its coordinate-wise minimizers. We also provide a proof of convergence of this method. Subsequent subsections introduce computational improvements and practical refinements. We also discuss strategies for initializing the parameter estimates and compare their effectiveness in different settings.

\subsection{Na\"ive Coordinate Descent for LAD}
\label{sec:methodology: naive}

Recall that the LAD estimator is defined as
\[
\widehat{\beta}_{\text{LAD}}
= \arg\min_{\beta \in \mathbb{R}^p}
\mathcal{L}(\beta), \qquad
\mathcal{L}(\beta) := \sum_{i=1}^{n} \big| y_i - x_i^\top \beta \big|.
\]
Coordinate descent proceeds by iteratively updating one component of $\beta$ at a time while holding the remaining coordinates fixed. Let $\beta^{(t)}$ denote the parameter vector at iteration $t$. For a given coordinate $j$, define the partial residual
\begin{equation}
    \label{eq. residuals_naive_lad}
    r^{(j)}_i = y_i - \sum_{k \neq j} x_{ik} \beta^{(t)}_k.
\end{equation}
We use $r^{(j)} \in \mathbb{R}^n$ to denote the vector of partial residuals associated with coordinate $j$, with components $r_i^{(j)}$ for $i = 1, \dots, n$. Updating the $j$th coordinate reduces to solving the univariate optimization problem
\[
\beta_j^{(t+1)} = \arg\min_{\theta \in \mathbb{R}}
\sum_{i=1}^{n} \big| r^{(j)}_i - x_{ij} \theta \big|.
\]
As shown in Section~\ref{sec:background}, this problem admits a closed-form solution. When $x_{ij} \neq 0$ for all $i$, the minimizer is given by the weighted median of the ratios $z_i = r^{(j)}_i / x_{ij}$ with weights $w_i = |x_{ij}|$ (see Section~\ref{Sec: MinNoInt}):
\begin{align*}
\beta_j^{(t+1)}
&= \operatorname{wmed}(z_1,\dots,z_n; w_1,\dots,w_n) \\
&= \arg\min_{b \in \mathbb{R}} \sum_{i=1}^{n} w_i |z_i - b| \\
&= \left\{\, b \in \mathbb{R} \;\middle|\;
\sum_{i : z_i \le b} w_i \ge \tfrac{1}{2}\sum_{i=1}^{n} w_i
\;\text{ and}\;
\sum_{i : z_i \ge b} w_i \ge \tfrac{1}{2}\sum_{i=1}^{n} w_i
\right\}.
\end{align*}

In the special case where $x_{ij} = 1$ for all $i$ (corresponding to an intercept term), the update reduces to the ordinary median of $\{r^{(j)}_i\}$, as discussed in Section~\ref{sec:background}. A full iteration consists of cycling through all coordinates, and the procedure is repeated until convergence. Since $\mathcal{L}(\beta)$ is convex and each coordinate update is an exact minimization, the resulting algorithm is guaranteed to converge to a global minimizer. We clearly lay out the algorithm structure in Algorithm~\ref{alg:LAD_naive}.

\begin{algorithm}[!ht]
\caption{Naive Coordinate Descent for LAD}
\label{alg:LAD_naive}
\begin{algorithmic}[1]
\Require Design matrix \(X \in \mathbb{R}^{n \times p}\), response vector \(y \in \mathbb{R}^n\), initial vector \(\beta^{(0)}\), maximum iterations \(T\), tolerance \(\varepsilon\)
\State \(\beta \gets \beta^{(0)}\)
\For{\(t = 0, 1, \dots, T-1\)}
  \For{\(j = 1, \dots, p\)}
    \State Compute partial residual $r^{(j)}$ as defined in \eqref{eq. residuals_naive_lad}

    \If{\(j\) corresponds to intercept column}
        \State \(\beta_j \gets \operatorname{median}(r^{(j)})\)
    \Else
        \State Compute \(z_i = r^{(j)}_i / x_{ij}\), \(w_i = |x_{ij}|\) for all \(i\)
        \State \(\beta_j \gets \operatorname{weightedMedian}(z, w)\)
    \EndIf
  \EndFor
  \State Compute objective \(\mathcal{L}(\beta)\)
  \If{relative decrease in \(\mathcal{L}\) below \(\varepsilon\)}
    \State \textbf{break}
  \EndIf
\EndFor
\State \Return final \(\beta\)
\end{algorithmic}
\end{algorithm}

\subsection{Descent and Global Convergence}

The descent and convergence properties of the proposed algorithm follow from two basic facts: (i) the LAD objective is convex, and (ii) each coordinate update exactly minimizes the objective with respect to the selected coordinate while holding all others fixed. Together, these properties imply that the objective value is monotonically non-increasing across iterations and that the algorithm converges to a global minimizer. We formalize this argument below.

\begin{lemma}[Convexity of the LAD objective]
\label{lem: Convex LAD}
Let
\[
\mathcal{L}(\beta) = \sum_{i=1}^n | y_i - x_i^\top \beta |, \qquad \beta \in \mathbb{R}^{p}.
\]
Then $L$ is convex in $\beta$.
\end{lemma}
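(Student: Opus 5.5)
The plan is to show that each summand $\beta \mapsto |y_i - x_i^\top \beta|$ is convex. Convexity of $\mathcal{L}$ then follows because a finite sum of convex functions is convex. This is a direct verification from the definition of convexity, and it needs none of the earlier univariate results.

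\textbf{Step 1 (a single term).} Fix $i$ and define the affine map $g_i(\beta) = y_i - x_i^\top \beta$. For any $\beta, \gamma \in \mathbb{R}^p$ and $\lambda \in [0,1]$, affinity gives
\[
g_i(\lambda \beta + (1-\lambda)\gamma) = \lambda g_i(\beta) + (1-\lambda) g_i(\gamma).
\]
This holds because $y_i = \lambda y_i + (1-\lambda) y_i$. Applying the triangle inequality and the absolute homogeneity of $|\cdot|$, with $\lambda, 1-\lambda \ge 0$, we get
\[
|g_i(\lambda \beta + (1-\lambda)\gamma)| \le \lambda |g_i(\beta)| + (1-\lambda)|g_i(\gamma)|.
\]
This is exactly the convexity of $\beta \mapsto |y_i - x_i^\top\beta|$. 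Equivalently, it is the composition of the convex function $|\cdot|$ with an affine map.

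\textbf{Step 2 (summation).} We sum the inequality from Step 1 over $i = 1, \dots, n$. The right-hand sides add to $\lambda \mathcal{L}(\beta) + (1-\lambda)\mathcal{L}(\gamma)$, and the left-hand sides add to $\mathcal{L}(\lambda\beta + (1-\lambda)\gamma)$. This gives the defining inequality of convexity for $\mathcal{L}$ on $\mathbb{R}^p$.

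\textbf{Main obstacle.} There is no real obstacle. The only point needing care is the affine identity in Step 1, which is where the constant $y_i$ is absorbed by splitting it as $\lambda y_i + (1-\lambda) y_i$. It may also be worth noting that $\mathcal{L}$ is generally not strictly convex, for example when $p > n$ or when $X$ is rank-deficient. This is consistent with the paper's remark that minimizers need not be unique, and the lemma only asserts convexity.
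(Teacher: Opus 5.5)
Your proof is correct and follows the same route as the paper: each summand is the convex function $|\cdot|$ composed with the affine map $\beta \mapsto y_i - x_i^\top\beta$, and a finite sum of convex functions is convex; you simply verify the composition step directly from the triangle inequality rather than citing it. One small note on your aside: $\mathcal{L}$ is piecewise affine, so it is never strictly convex, not only when $p>n$ or $X$ is rank-deficient, and minimizers can fail to be unique even for full-rank $X$ (e.g.\ the intercept-only fit to an even number of observations).
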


\begin{corollary}[Convexity of coordinate subproblems]
Fixing all coordinates except $\beta_j$ yields the one-dimensional subproblem
\[
g_j(t) = \sum_{i=1}^n |\, r_i^{(j)} - x_{ij} t |, \qquad 
r_i^{(j)} = y_i - \beta_0 - \sum_{k \neq j} \beta_k x_{ik}.
\]
As $g_j$ is a sum of convex functions of $t$, it is convex, and its minimizer is given by the weighted median of $\{r_i^{(j)}/x_{ij}\}$ with weights $|x_{ij}|$.
\end{corollary}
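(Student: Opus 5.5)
The plan has two parts: first show that $g_j$ is convex, then identify its minimizer through the reduction of Section~\ref{Sec: MinNoInt}.

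\textbf{Convexity.} Each summand $t \mapsto |r_i^{(j)} - x_{ij} t|$ is the composition of the absolute value with an affine map of $t$. Since $|\cdot|$ is convex (triangle inequality) and convexity is preserved under affine precomposition, each summand is convex. A finite sum of convex functions is convex, so $g_j$ is convex. Equivalently, $g_j$ is the restriction of $\mathcal{L}$ to the line $\{\beta + s e_j\}$, and Lemma~\ref{lem: Convex LAD} gives convexity directly, since the restriction of a convex function to a line is convex. Convexity also gives $g_j(t) \ge |x_{ij}|\,|t| - |r_i^{(j)}|$ for any $i$ with $x_{ij}\neq 0$. So $g_j$ is coercive as soon as some $x_{ij}\neq 0$, and being continuous and convex, it attains its minimum.

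\textbf{Minimizer.} Split the index set into $I_0=\{i: x_{ij}=0\}$ and $I_1=\{i: x_{ij}\neq 0\}$. Terms with $i\in I_0$ contribute the constant $\sum_{i\in I_0}|r_i^{(j)}|$ and do not affect the argmin. For $i \in I_1$ I rewrite $|r_i^{(j)} - x_{ij}t| = |x_{ij}|\,|z_i - t|$ with $z_i = r_i^{(j)}/x_{ij}$, exactly as in Section~\ref{Sec: MinNoInt}. Minimizing $g_j$ is then equivalent to minimizing $h(t)=\sum_{i\in I_1} w_i|z_i - t|$ with $w_i=|x_{ij}|$.

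To verify that the minimizers of $h$ are precisely the weighted medians, I compute the subdifferential:
\[
\partial h(t) = \Bigl[\, \sum_{z_i<t} w_i - \sum_{z_i>t} w_i - \sum_{z_i=t} w_i,\;\; \sum_{z_i<t} w_i - \sum_{z_i>t} w_i + \sum_{z_i=t} w_i \,\Bigr].
\]
Since $h$ is convex, $t$ is a minimizer iff $0\in\partial h(t)$. This holds iff $\sum_{z_i\le t} w_i \ge \sum_{z_i>t} w_i$ and $\sum_{z_i\ge t} w_i \ge \sum_{z_i<t} w_i$. Writing $W=\sum_{i \in I_1} w_i$, these are equivalent to $\sum_{z_i\le t}w_i\ge W/2$ and $\sum_{z_i\ge t}w_i\ge W/2$, which is the weighted-median set from the display in Section~\ref{sec:methodology: naive}. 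In the intercept case $x_{ij}\equiv 1$ all weights are equal, and this reduces to the ordinary median characterization of Eq.~\eqref{Eq: MeanAbsDiff}.

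\textbf{Main obstacle.} The delicate step is the degenerate case rather than the convexity. If $x_{ij}=0$ for all $i$, then $g_j$ is constant, every $t$ is a minimizer, and the weighted median over an empty set is undefined. In that case the statement must be read with the convention that any value (for example, the current $\beta_j$) is kept. A related subtlety is that the weighted median may be a whole closed interval rather than a single point. I would therefore state the conclusion as ``the set of minimizers equals the set of weighted medians of $\{z_i\}_{i\in I_1}$,'' which is what the subgradient argument delivers.
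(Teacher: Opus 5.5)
Your proof is correct and takes essentially the paper's route. The paper justifies the corollary in one line (each $t\mapsto|r_i^{(j)}-x_{ij}t|$ is convex, hence so is their sum) and gets the minimizer from the rewriting $|r_i^{(j)}-x_{ij}t|=|x_{ij}|\,|z_i-t|$ of Section~\ref{Sec: MinNoInt}, citing Koenker--Bassett for the weighted-median property rather than deriving it. Your subdifferential computation supplies that step, mirroring the paper's own subgradient argument for the ordinary median, and your remarks on existence, the all-zero column, and interval-valued minimizers go beyond what the paper states. One small correction: the bound $g_j(t)\ge|x_{ij}|\,|t|-|r_i^{(j)}|$ follows from dropping the other nonnegative summands and applying the reverse triangle inequality, not from convexity.
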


\begin{theorem}[Global convergence of LAD-CD]
\label{THM: LAD_CD-Convergence}
Let $\beta^{k}$ denote the parameter vector at iteration $k$, and let $\mathcal{L}(\beta^{k})$ be the corresponding objective value. Then
\[
\mathcal{L}(\beta^{k+1}) \leq \mathcal{L}(\beta^{k}), \quad \forall k,
\]
and the sequence $\{ \beta^{k} \}_{k=0}^{\infty}$ converges to a global minimizer $\beta^{\star}$ of $\mathcal{L}(\beta)$.
\end{theorem}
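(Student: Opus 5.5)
The plan has three stages: monotone descent, identification of limit points as minimizers, and convergence of the whole sequence. For the descent inequality $\mathcal{L}(\beta^{k+1})\le\mathcal{L}(\beta^k)$ I would write one sweep of Algorithm~\ref{alg:LAD_naive} as $p$ sub-steps $\beta^{k}=\beta^{k,0},\beta^{k,1},\dots,\beta^{k,p}=\beta^{k+1}$, where $\beta^{k,j}$ differs from $\beta^{k,j-1}$ only in coordinate $j$. By the Corollary, the $j$th coordinate of $\beta^{k,j}$ is an exact minimizer of the convex function $g_j$, and $g_j$ evaluated at the old value of that coordinate equals $\mathcal{L}(\beta^{k,j-1})$. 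Hence $\mathcal{L}(\beta^{k,j})\le\mathcal{L}(\beta^{k,j-1})$ for each $j$, and chaining over $j=1,\dots,p$ gives the inequality. Since $\mathcal{L}\ge 0$, the monotone sequence $\mathcal{L}(\beta^k)$ converges to some limit $\mathcal{L}^\infty$.

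Next I would show that every limit point $\bar\beta$ of $\{\beta^k\}$ is a global minimizer. All iterates lie in the sublevel set $\{\beta:\mathcal{L}(\beta)\le\mathcal{L}(\beta^0)\}$. I would first restrict to the column space where this set is compact (the component of $\beta$ in $\ker X$ does not affect $\mathcal{L}$), so that limit points exist. Passing to a subsequence along which all $p$ intermediate points $\beta^{k,j}$ converge, continuity of $\mathcal{L}$ and exactness of each update show that $\bar\beta$ minimizes $\mathcal{L}$ along every coordinate direction. To upgrade this coordinatewise optimality to global optimality I would invoke Lemma~\ref{lem: Convex LAD}, following the argument the paper indicates: for each $j$ there is some $s_j\in\partial\mathcal{L}(\bar\beta)$ with $(s_j)_j=0$. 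The goal is a single $s\in\partial\mathcal{L}(\bar\beta)$ with $s=0$, obtained from the polyhedral form
\[
\partial\mathcal{L}(\bar\beta)=\Bigl\{-\textstyle\sum_i u_i x_i:\ u_i=\mathrm{sgn}(y_i-x_i^\top\bar\beta)\text{ if nonzero},\ u_i\in[-1,1]\text{ otherwise}\Bigr\}.
\]
Once $0\in\partial\mathcal{L}(\bar\beta)$, convexity gives $\mathcal{L}(\bar\beta)=\min\mathcal{L}$, and therefore $\mathcal{L}^\infty=\min\mathcal{L}$.

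Finally, to obtain convergence of the whole sequence rather than only of a subsequence, I would show that $\|\beta^{k+1}-\beta^k\|\to0$. When a coordinate minimizer is not unique, I would choose the weighted median closest to the current value; without that tie-breaking rule this step can fail. With it, the set of limit points is connected and lies inside $\arg\min\mathcal{L}$. I would then show that each limit point is in fact the limit of the whole sequence, using that the distance of $\beta^k$ to the polyhedral solution set is nonincreasing along sweeps once the iterates are close to it.

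The main obstacle is the coordinatewise-to-global step. Because $\mathcal{L}$ is nonsmooth and not separable, the subgradients $s_j$ certifying optimality in each coordinate need not coincide, and coordinatewise minima can in principle be non-optimal kinks. My first attempt would exploit the specific structure of $\partial\mathcal{L}$: I would try to show that at an accumulation point of exact sweeps the residual sign pattern stabilizes, so that one choice of $u\in[-1,1]^n$ serves all coordinates simultaneously and yields $X^\top u=0$.
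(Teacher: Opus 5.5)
Your first stage is correct and is exactly the paper's argument: chaining the $p$ exact coordinate minimizations gives $\mathcal{L}(\beta^{k+1})\le\mathcal{L}(\beta^k)$, and hence the objective values converge. The genuine gap is the step you yourself single out, upgrading coordinatewise optimality of a limit point to $0\in\partial\mathcal{L}(\bar\beta)$. It cannot be repaired, because the global-convergence half of the theorem is false.

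Take $n=3$, $p=2$, rows $x_1=x_2=(1,-1)$, $x_3=(1,1)$ and $y=(0,0,2)$. Here the first column is an intercept column, $X$ has full column rank, and $n>p$. Then
\[
\mathcal{L}(\beta)=2|\beta_1-\beta_2|+|2-\beta_1-\beta_2|,
\]
whose unique minimizer is $(1,1)$ with $\mathcal{L}=0$. If $\beta_2=b$, the $\beta_1$-subproblem has ratios $(b,b,2-b)$ with unit weights, so for $b\neq 1$ its unique minimizer is $b$. Likewise the $\beta_2$-subproblem at $\beta_1=b$ has ratios $(b,b,2-b)$ and returns $b$. Hence from any start with $\beta_2^{(0)}=b\neq 1$, including the zero start used in the paper, one sweep lands at $(b,b)$ and the iterates stay there forever, with $\mathcal{L}=2|1-b|>0$.

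At $\bar\beta=(0,0)$ your coordinate certificates exist separately but not jointly. With $u_3=1$ forced, coordinate $1$ needs $u_1+u_2=-1$ and coordinate $2$ needs $u_1+u_2=1$, so no common $u$ exists. The iterates are constant, so sign-pattern stabilization cannot help. The coordinate minimizers are unique, so no tie-breaking rule (including your nearest-median rule) helps either. The direction $(1,1)$ is a descent direction that no single-coordinate move can see; this is the standard failure of coordinate descent on nonsmooth, non-separable convex objectives. Since the example also satisfies uniqueness of the minimizer, no usual regularity condition rescues the claim.

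The paper's proof has the same hole. It asserts that every limit point is a global minimizer by appeal to Tseng (2001). In general that result only yields coordinatewise minima, and it upgrades them to stationary points only under a regularity condition (satisfied, e.g., when the nondifferentiable part is separable across coordinates). That condition fails at the kinks of $\sum_i|y_i-x_i^\top\beta|$. The paper's proof also says nothing about existence of limit points or convergence of the whole sequence.

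The paper's own tables are consistent with the counterexample: LAD-CD stops strictly above the LP optimum, with in-sample MAE $3.53$ versus $3.08$ on Boston Housing, and $1.318$ versus $0.481$ at $p=500$ with zero initialization.

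Your third stage is also unsupported. The iterates need not stay in $(\ker X)^\perp$, so compactness of the sublevel set there does not give limit points of $\beta^k$ when $p>n$. The monotonicity of distance to the solution set is asserted without argument. This is moot given the above. What can actually be proved is only monotone decrease and convergence of $\mathcal{L}(\beta^k)$. A global-convergence claim requires changing the algorithm, or adding hypotheses that exclude non-optimal coordinatewise minima; a better proof will not do it.
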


\noindent
Proofs for Lemma~\ref{lem: Convex LAD}, the corollary, and Theorem~\ref{THM: LAD_CD-Convergence} are deferred to Appendix~\ref{app:convergence}.

This result establishes that LAD-CD is both \emph{monotone} and \emph{globally convergent} under standard regularity conditions. Empirically, we observe convergence within a few dozen iterations in both low- and high-dimensional settings (see Section~\ref{sec:experiments}). 

While the naive coordinate descent formulation is straightforward and theoretically sound, it is computationally inefficient in practice because it requires recomputing partial residuals at each coordinate update. This cost becomes prohibitive in high-dimensional settings. We therefore introduce an optimized implementation that exploits incremental residual updates to eliminate redundant computations. Subsequent subsections build on this formulation by incorporating warm-start strategies and a hybrid global--local optimization approach.

\subsection{Optimized Coordinate Descent: Incremental Residual Updates}

A direct implementation of coordinate descent for LAD regression, while conceptually simple, suffers from a significant computational bottleneck that limits its applicability in high-dimensional problems. The inefficiency arises from repeatedly recomputing the partial residual vector for each coordinate update.

In the naive implementation, a full cycle over all $p$ coordinates constitutes one iteration. Let $X_{:j} \in \mathbb{R}^n$ denote the $j$th column of the design matrix $X$. Within this cycle, the partial residual is recomputed from scratch for each coordinate,
\[
r^{(j)} = y - \sum_{k \neq j} X_{:k} \beta_k,
\]
which requires a matrix--vector product of cost $O(np)$. Repeating this operation for all $p$ coordinates yields a per-iteration complexity of $O(np^2)$, which quickly becomes prohibitive as $p$ grows.

To overcome this limitation, we introduce an optimized implementation that maintains a single global residual vector,
\[
\mathrm{res} = y - X\beta,
\]
and updates it incrementally with each coordinate update. Rather than recomputing partial residuals from scratch, each coordinate update proceeds through three inexpensive steps.

\begin{description}
    \item[\textbf{1. Restore contribution ($O(n)$).}]
    The contribution of the current coefficient $\beta_j$ is temporarily restored to obtain the partial residual required for the update:
    \[
    r^{(j)} = \mathrm{res} + X_{:j}\beta_j.
    \]

    \item[\textbf{2. Compute coordinate update ($O(n \log n)$).}]
    The updated coefficient $\beta_j^{\text{new}}$ is obtained by solving the one-dimensional subproblem, which reduces to a weighted median computation:
    \[
    \beta_j^{\text{new}} = \operatorname{weightedMedian}\!\biggl( \frac{r^{(j)}_i}{x_{ij}}, \; |x_{ij}| \biggr).
    \]
    Using a standard sort-based routine, this step has complexity $O(n \log n)$.

    \item[\textbf{3. In-place residual update ($O(n)$).}]
    The global residual is updated using the coefficient change $\Delta_j = \beta_j^{\text{new}} - \beta_j$:
    \[
    \mathrm{res} \leftarrow \mathrm{res} - X_{:j}\Delta_j.
    \]
\end{description}

By replacing the $O(np)$ recomputation of partial residuals with $O(n)$ incremental updates, the cost of a single coordinate update is dominated by the weighted median computation. It is therefore $O(n \log n)$. As a result, a full sweep over all $p$ coordinates has complexity $O(p n \log n)$, representing a substantial improvement over the $O(np^2)$ cost of the naive implementation, particularly in high-dimensional settings.

In addition to this asymptotic speedup, the incremental formulation improves practical performance by enhancing memory locality. Each update accesses only a single column of $X$ (the data matrix) and the residual vector, leading to improved cache utilization and reduced wall-clock time. We lay this down clearly in Algorithm~\ref{alg:LAD_optimized}.

\begin{algorithm}[!ht]
\caption{Optimized LAD Coordinate Descent with Incremental Residual Updates}
\label{alg:LAD_optimized}
\begin{algorithmic}[1]
\Require Design \(X \in \mathbb{R}^{n \times p}\), response \(y \in \mathbb{R}^n\), initial \(\beta^{(0)}\), max iterations \(T\), tolerance \(\varepsilon\)
\State \(\beta \gets \beta^{(0)}\)
\State \(\mathrm{res} \gets y - X\beta\)
\For{\(t = 0, 1, \dots, T-1\)}
  \For{\(j = 1, \dots, p\)}
    \State \(r^{(j)} \gets \mathrm{res} + X_{:j} \beta_j\)
    \If{\(j\) corresponds to intercept column}
        \State \(\beta_j^{\text{new}} \gets \operatorname{median}(r^{(j)})\)
    \Else
        \State Compute \(z_i = r^{(j)}_i / x_{ij}\), \(w_i = |x_{ij}|\)
        \State \(\beta_j^{\text{new}} \gets \operatorname{weightedMedian}(z, w)\)
    \EndIf
    \State \(\Delta_j \gets \beta_j^{\text{new}} - \beta_j\)
    \State \(\beta_j \gets \beta_j^{\text{new}}\)
    \State \(\mathrm{res} \gets \mathrm{res} - X_{:j} \Delta_j\)
  \EndFor
  \State Compute objective $\mathcal{L}(\beta) = \sum_{i=1}^n |\mathrm{res}_i|$

  \If{relative decrease in \(\mathcal{L}\) below \(\varepsilon\)}
    \State \textbf{break}
  \EndIf
\EndFor
\State \Return final \(\beta\)
\end{algorithmic}
\end{algorithm}

\subsection{Warm-Start Strategies for Coordinate Descent}
\label{sec:method:warmstart}

Because coordinate descent is a descent-based optimization method, its convergence behavior, and in finite samples, even the attained solution, can depend on the choice of initialization. Selecting an appropriate starting point can therefore substantially accelerate convergence and improve finite-sample performance, particularly in high-dimensional settings or in the presence of outliers. We consider two complementary classes of warm-start strategies.

\paragraph{(i) Pre-fitted model initialization.}
When a computationally inexpensive preliminary estimator is available, LAD-CD can be initialized from its coefficient estimates. Two particularly effective choices are:
\begin{itemize}
    \item \textbf{Ridge regression.}  
    As a simple and computationally inexpensive pre-fit, we consider ridge regression, which solves a penalized least-squares problem of the form
    \[
    \widehat{\beta}_{\text{ridge}}
    = \arg\min_{\beta \in \mathbb{R}^p}
    \left\{ \sum_{i=1}^n (y_i - x_i^\top \beta)^2 + \lambda \|\beta\|_2^2 \right\},
    \]
    where $\lambda > 0$ is a regularization parameter that controls the strength of $\ell_2$ shrinkage. The corresponding solution can be written in closed form as
    \[
    \widehat{\beta}_{\text{ridge}}
    = \bigl(X^\top X + \lambda I\bigr)^{-1} X^\top y,
    \]
    where, again, $X \in \mathbb{R}^{n \times p}$ denotes the design matrix with rows $x_i^\top$ and $y \in \mathbb{R}^n$ is the response vector.
    
    The $\ell_2$ penalty stabilizes estimation in high-dimensional or ill-conditioned settings by shrinking coefficients toward zero, thereby reducing variance at the cost of a small bias \citep{HoerlKennard1970, hastie2009}. In practice, ridge regression provides a deterministic and numerically stable initialization, particularly when $p > n$. When used as a warm start, LAD-CD primarily acts to correct the bias induced by the quadratic loss and the $\ell_2$ penalty, while retaining the robustness of the LAD objective.

    \item \textbf{Genetic algorithm (GA).}  
    A GA-based initializer performs a global search over the parameter space and can explore multiple basins of attraction. Although computationally more expensive than ridge, this approach can be beneficial in settings with heavy-tailed noise or complex loss landscapes, where simple linear shrinkage may be inadequate.
\end{itemize}
Empirically, both ridge- and GA-based initializations lead to faster convergence and improved prediction accuracy compared to arbitrary starting values. In particular, the hybrid GA+CD approach can be useful in severely underdetermined settings ($p \ge n$), where LP-based LAD solvers may fail to produce reliable solutions.

\paragraph{(ii) Random multi-start initialization.}
Even in the absence of a pre-fitted model, LAD-CD typically converges in a small number of iterations—often stabilizing within $30$--$50$ iterations (see Section~\ref{sec:experiments}). This low per-iteration cost enables a simple multi-start strategy: multiple random initializations $\{\beta^{(0)}_k\}_{k=1}^K$ are generated, the algorithm is run from each starting point, and the solution with the smallest final loss is retained. This approach provides an inexpensive way to reduce sensitivity to initialization by exploring multiple regions of the parameter space.

\medskip
In practice, we recommend a two-tiered strategy: use a pre-fitted model, particularly ridge regression, whenever available to achieve fast, stable convergence, and employ random multi-starts as a fallback when such initializations are unavailable or when additional robustness is desired.

\section{Experiments}
\label{sec:experiments}

This section evaluates the empirical performance of the proposed LAD coordinate descent algorithm. We first study its convergence behavior and stability under controlled synthetic settings, where ground-truth parameters are known, and then assess robustness across repeated random initializations. Subsequent experiments consider contaminated data, high-dimensional regimes, and real-world benchmark datasets.

To evaluate the proposed method in a controlled setting, we generate synthetic data from the linear model
\[
y = X \beta^\star + \varepsilon,
\qquad \varepsilon \sim \mathcal{N}(0, \sigma^2 I_n),
\]
where $X \in \mathbb{R}^{n \times p}$ has i.i.d.\ standard normal entries, with an intercept column included, and $\beta^\star$ is a fixed coefficient vector. This setup allows direct assessment of parameter recovery and predictive accuracy, and enables comparison across both well-posed ($p < n$) and underdetermined ($p \ge n$) regimes (we also talk about the latter in a later experiment).

We apply the proposed LAD coordinate descent algorithm to this dataset and track its behavior over iterations. Figure~\ref{fig: EM LAD generated data set} illustrates the fitted regression line obtained by the algorithm, along with the evolution of the mean absolute error (MAE) for both parameter estimates and predictions. These diagnostics provide insight into the method's convergence dynamics and its ability to recover the underlying signal.

\begin{figure}[!ht]
    \centering
    \includegraphics[width=0.45\linewidth]{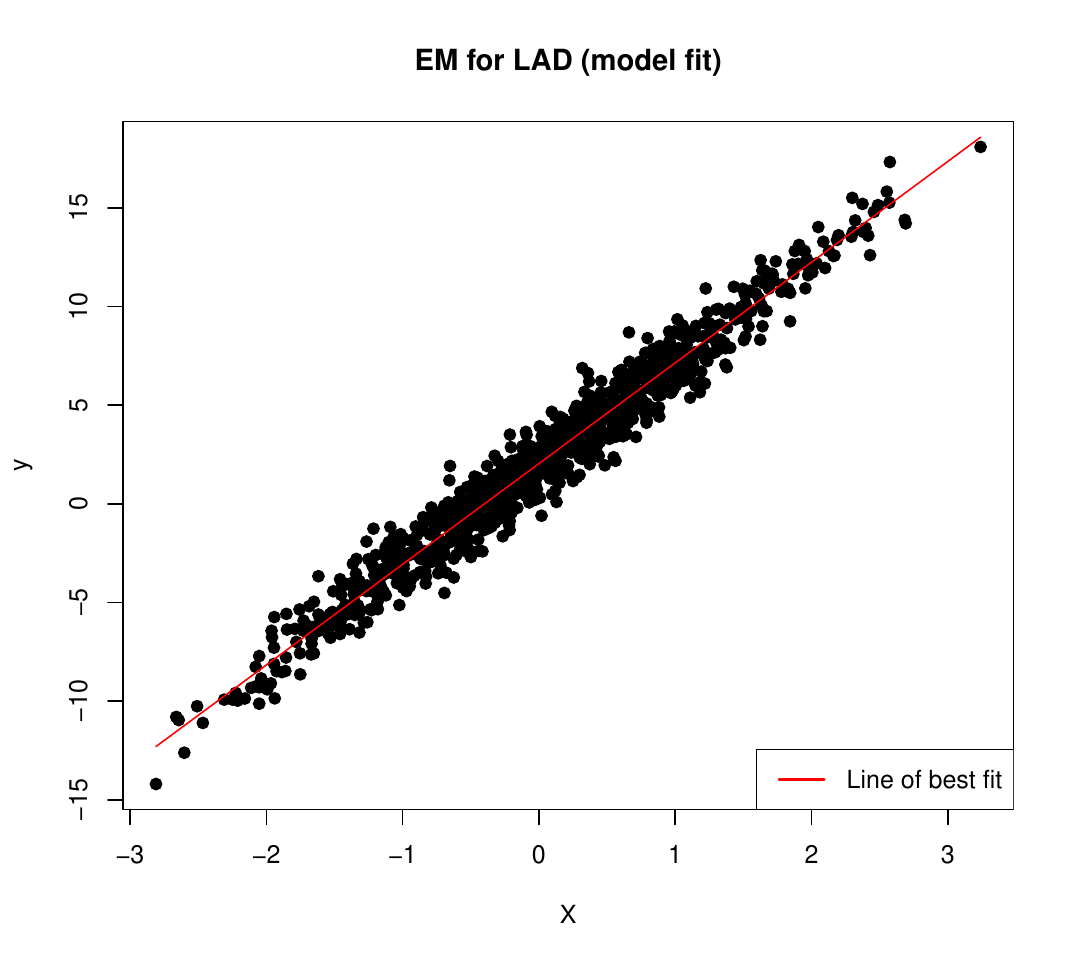}
    \hfill
    \includegraphics[width=0.45\linewidth]{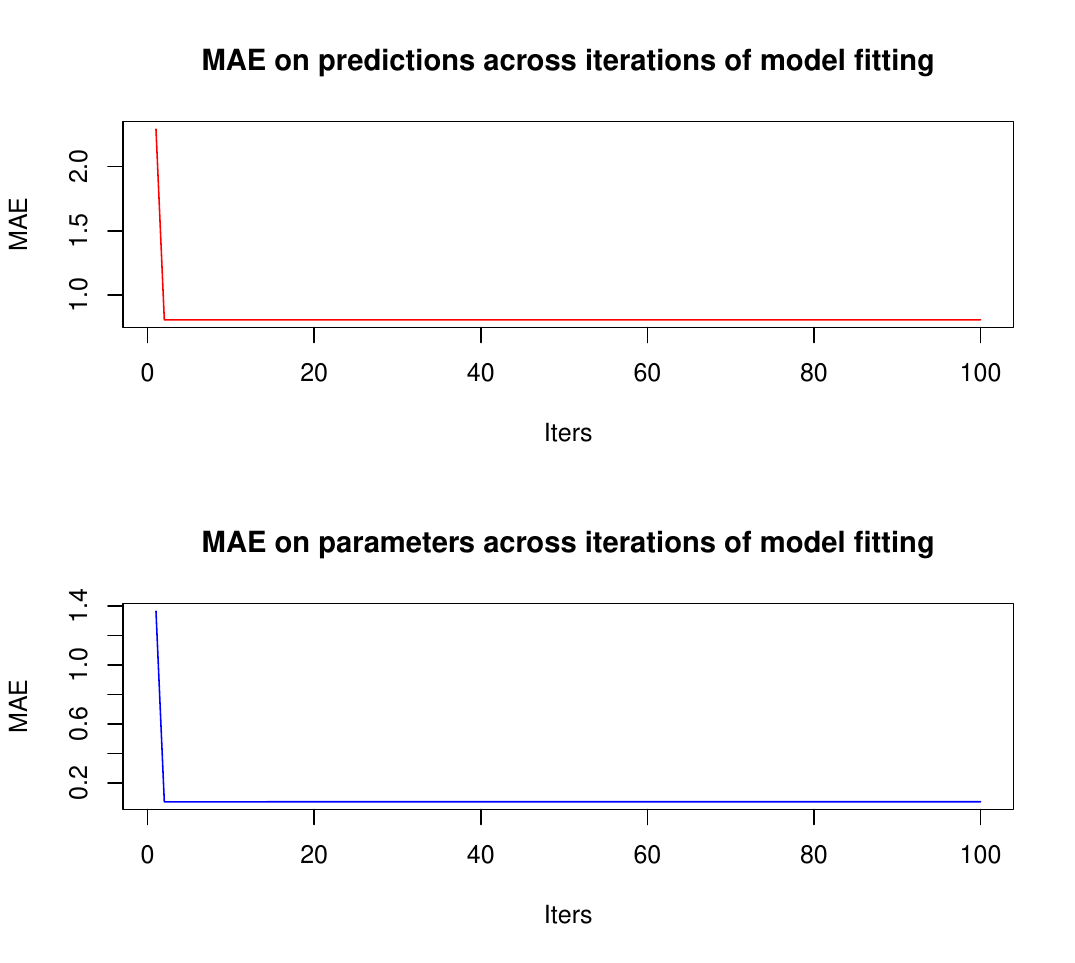}
    \caption{performance of the LAD coordinate descent algorithm on synthetic data. 
    Left: fitted regression line obtained by LAD-CD. 
    Right: evolution of parameter MAE and prediction MAE over iterations.}
    \label{fig: EM LAD generated data set}
\end{figure}

To assess robustness and stability with respect to initialization, we repeat the synthetic experiment described above 1000 times, each with a different random seed and random initialization of the coefficient vector. For each run, the MAE is recorded at every iteration until convergence. This allows us to examine both the average convergence trajectory and the variability of the final solution across repeated runs.

Figure~\ref{fig: LADCD_convergence_distribution} summarizes the distribution of final MAE values across the 1000 replications. The distribution is sharply concentrated, with a mean final MAE of approximately $0.7539$ and a standard deviation of $6 \times 10^{-4}$. The corresponding 90\% confidence interval, $[0.7529, 0.7555]$, is narrow, indicating that the optimized LAD-CD algorithm converges to essentially identical solutions across runs, despite random initialization.

We further examine the final prediction MAE obtained by fitting the synthetic model described above using the proposed method for varying sample sizes $n$, with the dimension fixed at $p = 100$. We compare two initialization strategies: (i) an uninformed start with all coefficients initialized at zero, and (ii) a warm start obtained by perturbing the true parameter vector with independent $\mathcal{N}(0,1)$ noise. The resulting MAE values are reported in Table~\ref{tab:initial_mae_scaling}.

When $n$ is very small relative to $p$ (e.g., $n \in \{10, 20\}$), the problem is severely underdetermined, and the fitted solutions exhibit substantial variability across initializations. In moderately underdetermined regimes ($n \in \{50, 100\}$), warm-start initialization leads to improved predictive performance and more stable parameter estimates, indicating that initialization plays an important role in finite samples. Finally, when the model becomes well-specified ($n \ge p$), the algorithm consistently attains stable MAE values across both initialization strategies, reflecting the underlying convex structure of the objective.

\begin{figure}[!ht]
\centering
\includegraphics[width=0.6\linewidth]{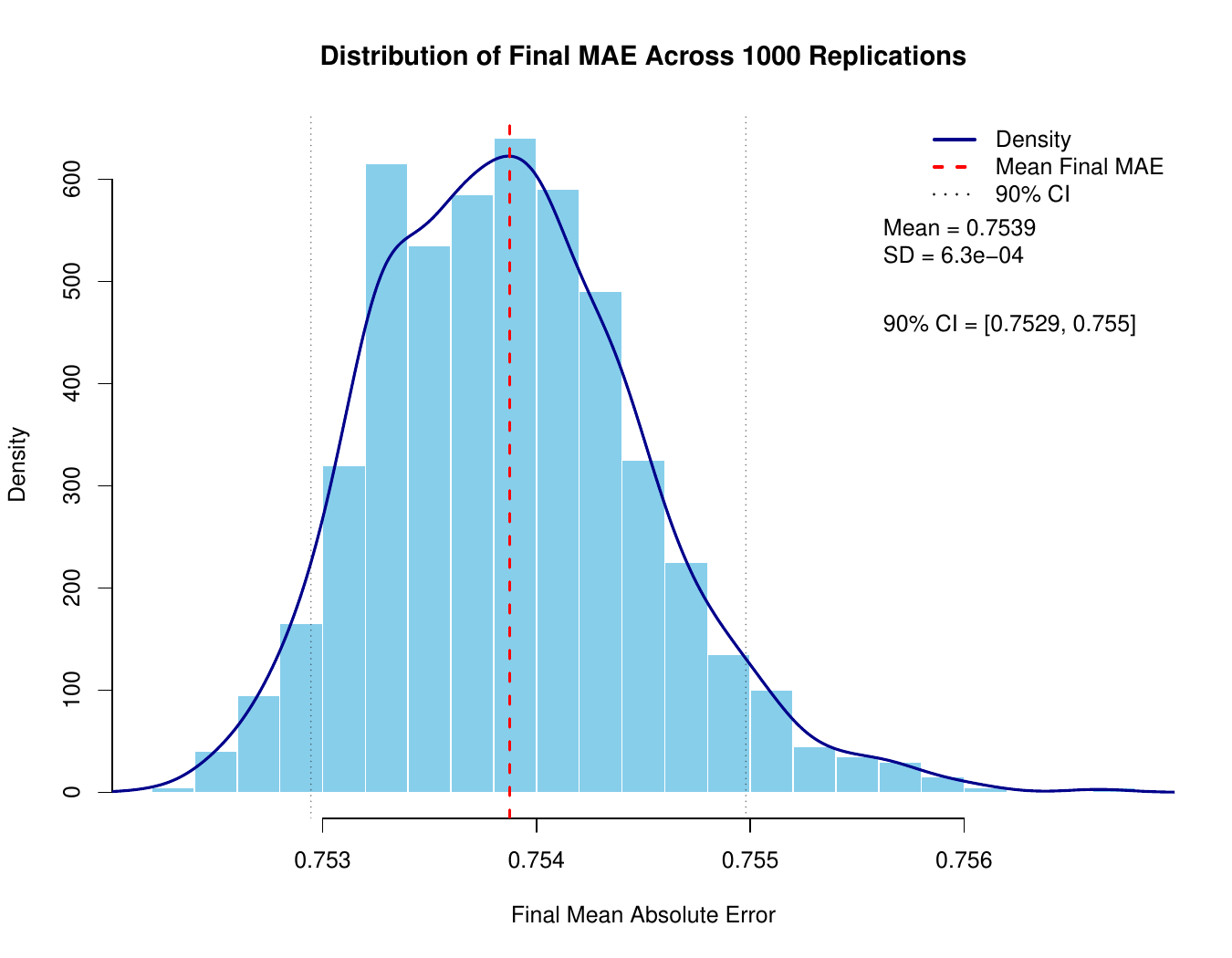}
\caption{Distribution of final prediction MAE across 1000 independent replications, illustrating the convergence stability of the optimized LAD-CD algorithm.}
\label{fig: LADCD_convergence_distribution}
\end{figure}

\subsection{Performance on Data with Outliers}
\label{sec:experiments:outliers}

We first evaluate the robustness of the proposed LAD-CD algorithm under contamination by outliers. A simple linear regression dataset is generated with $n = 1000$ observations and a single covariate $x$, according to
\[
y_i = \beta_0 + \beta_1 x_i + \varepsilon_i,
\qquad \beta_0 = 2,\ \beta_1 = 5,\ \varepsilon_i \sim \mathcal{N}(0, 5^2).
\]
To introduce contamination, $20\%$ of the observations are replaced with outliers generated from the same model but with inflated noise variance, $\varepsilon_i \sim \mathcal{N}(0, 25^2)$.

Three estimators are fit to this dataset:
\begin{enumerate}
    \item LAD regression using the proposed coordinate descent algorithm (LAD-CD);
    \item Ordinary Least Squares (OLS) regression as a baseline;
    \item Median (quantile) regression at $\tau = 0.5$ using the \texttt{quantreg} package, which solves the LAD problem via linear programming.
\end{enumerate}

Figure~\ref{fig:outlier-fits} compares the fitted regression lines obtained by the three methods. The OLS fit is clearly influenced by the contaminated observations, leading to a biased estimate of the underlying linear trend. In contrast, both LAD-CD and QuantReg closely recover the true signal, reflecting the robustness of the LAD objective.

To examine convergence behavior, we track the mean absolute error (MAE) of both parameter estimates and predictions across LAD-CD iterations. These trajectories are shown alongside horizontal reference lines indicating the MAEs achieved by OLS and QuantReg. The LAD-CD algorithm converges within approximately 20 iterations and attains parameter and prediction errors comparable to those of QuantReg, while OLS exhibits substantially larger errors.

\begin{figure}[!ht]
    \centering
    \includegraphics[width=0.45\linewidth]{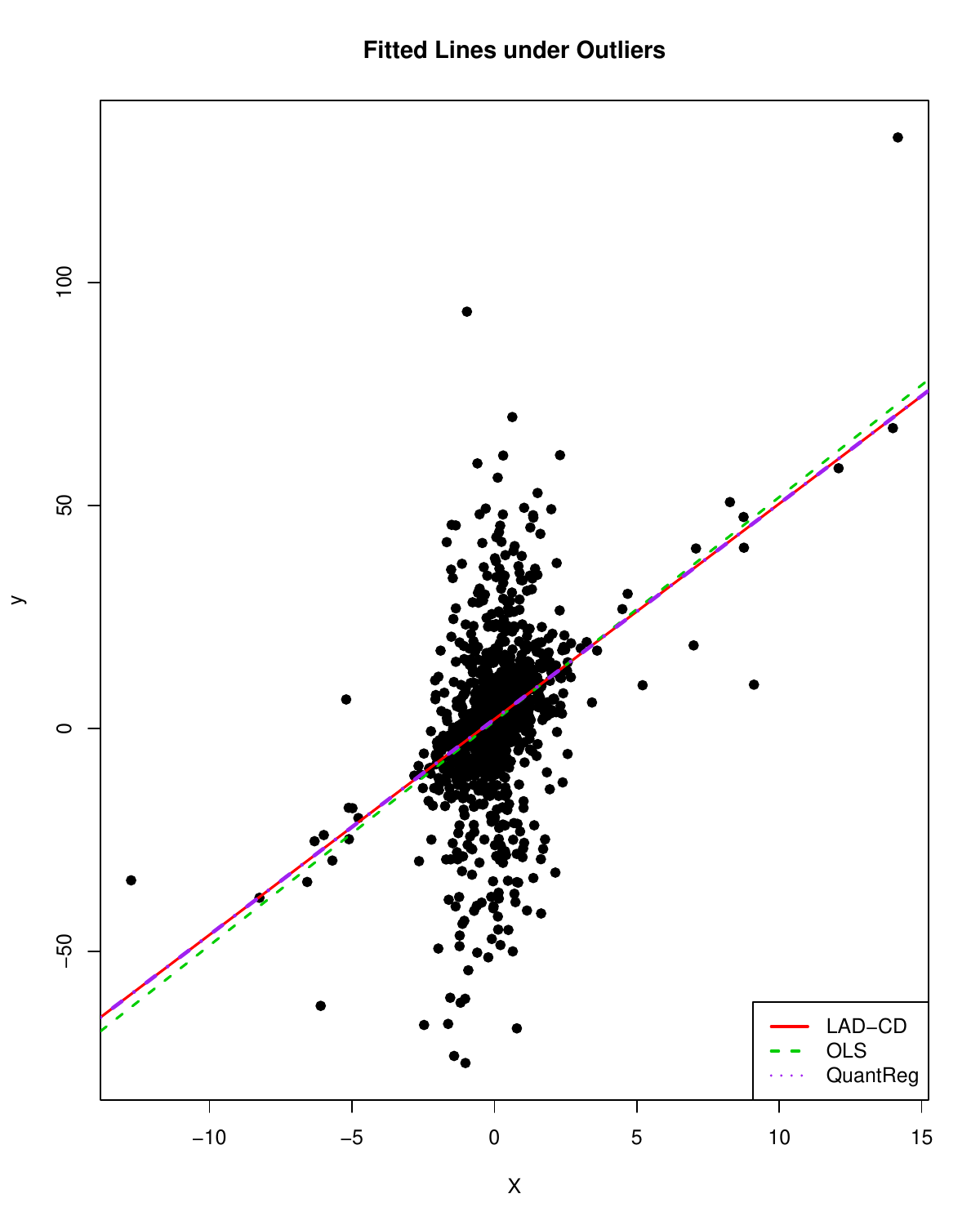}
    \hfill
    \includegraphics[width=0.45\linewidth]{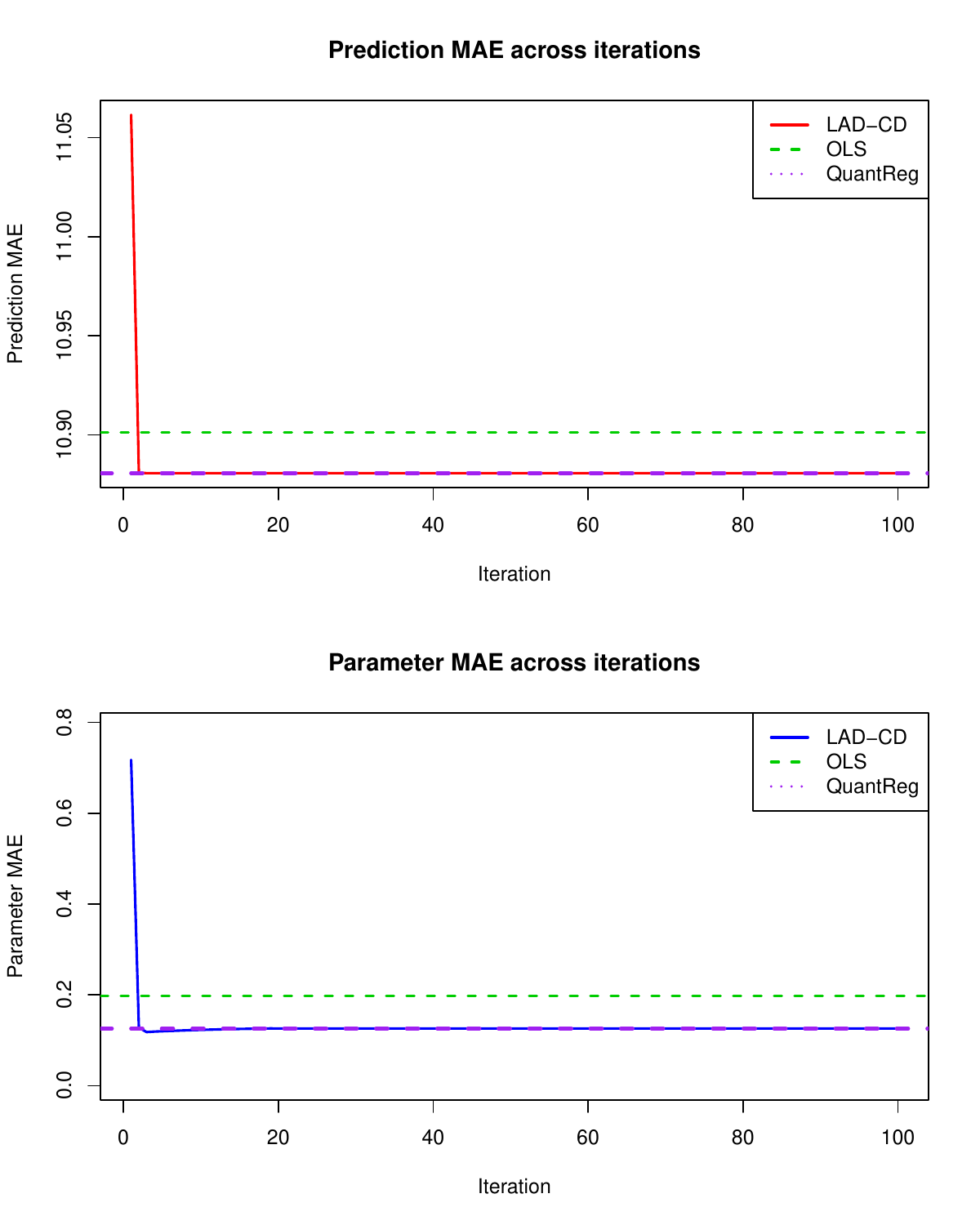}
    \caption{Fitted regression lines for LAD-CD (red), OLS (green dashed), and QuantReg (purple dotted) on the outlier-contaminated dataset (left), and convergence of parameter and prediction MAE for LAD-CD across iterations (right).}
    \label{fig:outlier-fits}
\end{figure}

\subsection{Improving Pre-Fitted Models via LAD-CD}
\label{sec:experiments:warmstart}

We next examine the effectiveness of the proposed warm-start strategies by assessing whether LAD-CD can refine parameter estimates obtained from simpler pre-fitted models. Two initializations are considered: (i) a closed-form ridge regression estimator and (ii) a genetic algorithm (GA)- based global search solution. Both initializations are applied to a synthetic dataset with heavy-tailed noise and $20\%$ severe contamination in the response variable.

Figure~\ref{fig:warmstart-combined} presents a visual comparison of the fitted models. The left panel shows the initial fits obtained from ridge regression and the GA, while the right panel displays the corresponding fits after LAD-CD refinement. In both cases, the refinement step substantially reduces the influence of outliers, yielding fitted lines that more closely track the underlying data-generating process.

Quantitative results are reported in Table~\ref{tab:warmstart-mae}, which summarizes the prediction mean absolute error (MAE) before and after LAD-CD refinement, along with the best achievable MAE computed using the true generating parameters $\beta_{\text{true}}$. When initialized from ridge regression, LAD-CD reduces prediction MAE by approximately $32.5\%$, while GA-based initialization yields a larger reduction of approximately $44\%$. In both cases, the refined solutions attain MAE values essentially indistinguishable from the best achievable benchmark, indicating that LAD-CD effectively corrects the bias present in the initial estimators even under substantial contamination.

\begin{figure}[!ht]
    \centering
    \includegraphics[width=0.6\linewidth]{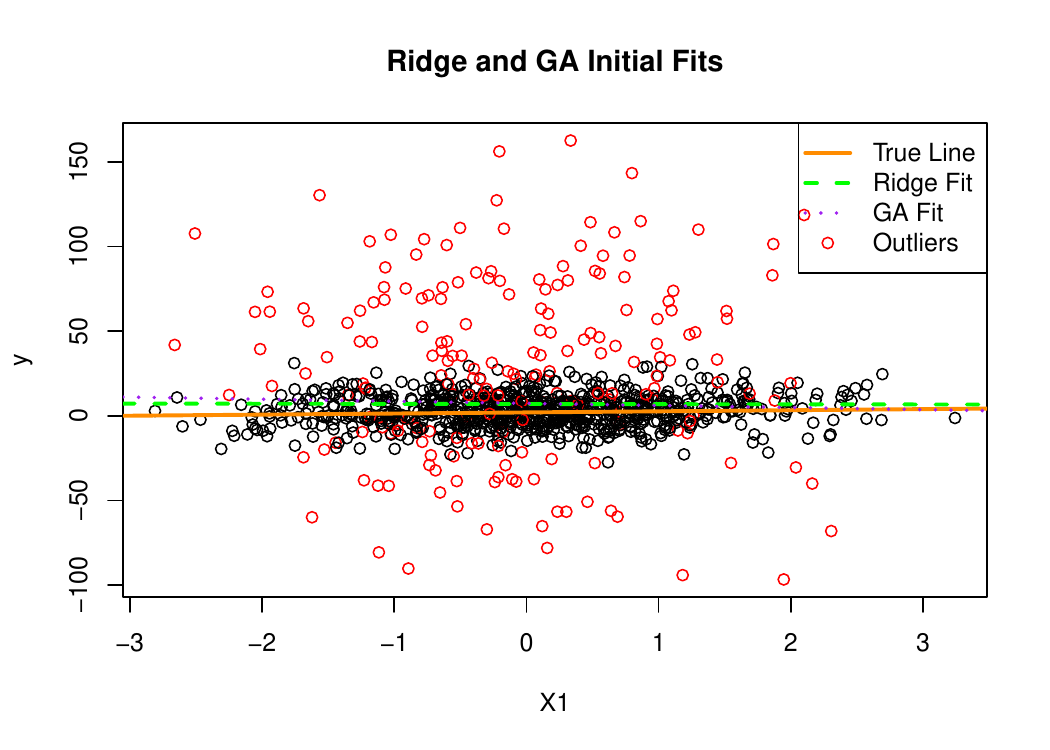}
    \hfill
    \includegraphics[width=0.35\linewidth]{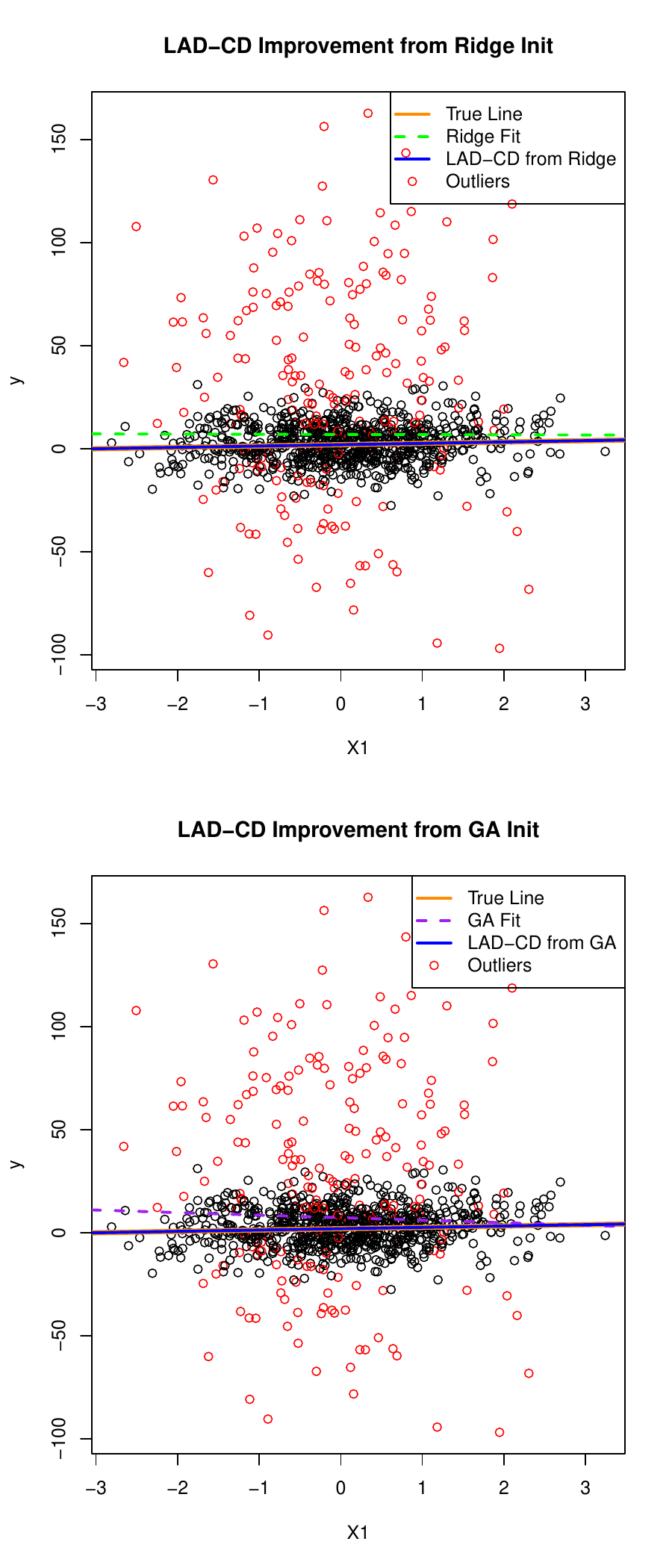}
    \caption{Warm-start results on outlier-contaminated data. 
    \textbf{Left:} Initial Ridge and GA fits. 
    \textbf{Right:} LAD–CD refinement substantially reduces bias and moves the fitted line closer to the true signal.}
    \label{fig:warmstart-combined}
\end{figure}

\begin{table}[!ht]
\centering
\begin{tabular}{lcccc}
\hline
\textbf{Initialization} & \textbf{Initial MAE} & \textbf{Final MAE} & \textbf{Best MAE} & \textbf{Improvement} \\
\hline
Ridge Init + LAD–CD & 13.49 & 9.10 & 9.10 & 4.39 \\
GA Init + LAD–CD    & 16.29 & 9.10 & 9.10 & 7.19 \\
\hline
\end{tabular}
\caption{Prediction MAE before and after LAD–CD refinement on outlier-contaminated data. The final MAE is nearly identical to the best achievable MAE, computed using the true generating parameters.}
\label{tab:warmstart-mae}
\end{table}

\subsection{Performance in High-Dimensional Settings}
\label{sec:experiments:highdim}

We next evaluate the performance of the proposed method in high-dimensional regimes. Synthetic datasets are generated with $n = 1000$ observations and $p \in \{100, 500, 1000, 1500, 2000\}$ predictors. For each dataset, we compare the following estimators:
\begin{enumerate}
    \item \textbf{QuantReg (QR):} Median regression solved via the Barrodale--Roberts linear programming algorithm;
    \item \textbf{LAD-CD (zero-init):} Coordinate descent initialized at $\beta^{(0)} = 0$;
    \item \textbf{Ridge regression:} $\ell_2$-regularized least squares;
    \item \textbf{Genetic algorithm (GA):} Global search via a GA metaheuristic;
    \item \textbf{Ridge + LAD-CD:} Coordinate descent initialized at the ridge solution;
    \item \textbf{GA + LAD-CD:} Coordinate descent initialized at the GA solution;
    \item \textbf{Best MAE:} Oracle reference computed as $\mathrm{MAE}(y - X\beta_{\text{true}})$.
\end{enumerate}
Results are summarized in two complementary tables: Table~\ref{tab:highdim-ridge} reports ridge-based initialization, while Table~\ref{tab:highdim-ga} reports GA-based initialization.

\begin{table}[H]
\centering
\caption{Performance with Ridge initialization. Best MAE is computed as $\text{MAE}(y - X\beta_{\text{true}})$. Runtime reports the total time for Ridge fitting + LAD-CD refinement.}
\label{tab:highdim-ridge}
\scriptsize
\begin{tabular}{rrrrrrrr}
\toprule
$n$ & $p$ & MAE(QR) & MAE(LAD-CD) & MAE(Ridge) & MAE(LAD-CD$\mid$Ridge) & Best MAE & Runtime\\
\midrule
1000 &  100 & 0.739 & 0.743 & 0.759 & \textbf{0.741} & 0.819 & 2.50 \\
1000 &  500 & 0.481 & 1.318 & 0.572 & \textbf{0.523} & 0.812 & 28.17 \\
1000 & 1000 & $\sim$ 0 & 4.925 & 0.157 & \textbf{0.137} & 0.795 & 107.59\\
1000 & 1500 & $\dagger$ & 6.735 & 0.034 & \textbf{0.023} & 0.785 & 228.66\\
1000 & 2000 & $\dagger$ & 7.984 & 0.026 & \textbf{0.015} & 0.785 & 397.78\\
\bottomrule
\end{tabular}
\end{table}

\begin{table}[H]
\centering
\caption{Performance with Genetic Algorithm (GA) initialization. Best MAE is computed as $\text{MAE}(y - X\beta_{\text{true}})$. Runtime reports GA search + LAD-CD refinement.}
\label{tab:highdim-ga}
\scriptsize
\begin{tabular}{rrrrrrrr}
\toprule
$n$ & $p$ & MAE(QR) & MAE(LAD-CD) & MAE(GA) & MAE(LAD-CD$\mid$GA) & Best MAE & Runtime\\
\midrule
1000 &  100 & 0.739 & 0.743 & 7.833 & 0.743 & 0.819 & 1.82\\
1000 &  500 & 0.481 & 1.318 & 21.824 & 1.754 & 0.812 & 28.55\\
1000 & 1000 & $\sim$ 0 & 4.925 & 30.142 & 6.736 & 0.795 & 107.51\\
1000 & 1500 & $\dagger$ & 6.735 & 39.830 & 8.977 & 0.785 & 225.15\\
1000 & 2000 & $\dagger$ & 7.984 & 48.353 & 11.531 & 0.785 & 390.584\\
\bottomrule
\end{tabular}
\end{table}

Tables~\ref{tab:highdim-ridge} and~\ref{tab:highdim-ga} reveal several consistent patterns. The symbol $\dagger$ indicates cases in which the \texttt{QuantReg} solver fails to return a solution due to underdetermination ($p > n$). When $p = n$, the quantile regression solver achieves a trivially small in-sample MAE due to exact interpolation, which does not reflect genuine predictive performance.

\begin{itemize}
    \item \textbf{Ridge-initialized LAD-CD performs best across dimensions.}  
    For all values of $p$, the combination of ridge initialization followed by LAD-CD refinement achieves the lowest prediction MAE. This indicates that a well-regularized initial estimate provides an effective starting point, allowing LAD-CD to focus on correcting the bias introduced by quadratic loss and $\ell_2$ shrinkage.

    \item \textbf{Ridge outperforms the unregularized oracle in the $p > n$ regime.}  
    When $p > n$, ridge regression yields lower prediction MAE than the loss computed using the true parameter vector $\beta_{\text{true}}$. This behavior reflects the classical bias--variance trade-off: ridge regression shrinks coefficient estimates toward zero, reducing variance and improving predictive accuracy in finite samples. This phenomenon is well documented in the ridge regression literature \citep{HoerlKennard1970} and is closely related to Stein-type shrinkage effects \citep{JamesStein1961, EfronMorris1977}; see also \citet{hastie2009} for a modern discussion.

    \item \textbf{GA warm-starts are less effective in high dimensions.}  
    While GA-based initialization produces feasible solutions even when $p \ge n$, the resulting estimates exhibit substantially larger prediction error. LAD-CD refinement improves these solutions but does not close the gap relative to ridge-based initialization. This suggests that GA is better suited as a fallback initialization strategy rather than a primary estimator in high-dimensional settings.

    \item \textbf{LP-based quantile regression does not scale to $p \ge n$.}  
    The quantile regression solver fails to return solutions in underdetermined regimes, whereas LAD-CD remains stable and well-defined. When $p < n$, quantile regression performs competitively, but its lack of scalability highlights a key advantage of the proposed approach.

    \item \textbf{Scalability and runtime.}  
    Although runtime increases with $p$, ridge-initialized LAD-CD remains computationally feasible even for $p = 2000$ with $n = 1000$, a regime that poses significant challenges for standard LAD solvers.
\end{itemize}

Overall, these results demonstrate that ridge-initialized LAD-CD provides a scalable and effective approach to robust regression in high-dimensional settings, achieving lower prediction error than GA-based alternatives and remaining applicable in regimes where LP-based LAD solvers fail.

\subsection{Boston Housing dataset}
\label{sec:experiments:boston}

We evaluate the proposed LAD coordinate descent algorithm on the Boston Housing dataset, a standard benchmark for robust regression. The dataset contains 506 observations with 13 covariates, and the response variable is the median house value (\texttt{medv}). Owing to the presence of outliers and heteroskedasticity, this dataset is commonly used to assess the performance of robust regression methods.

We compare LAD regression fitted using the proposed coordinate descent algorithm (LAD-CD) with the classical implementation provided by the \texttt{quantreg} package \citep{KoenkerDOrey1987}, which solves the LAD problem via the Barrodale--Roberts linear programming algorithm. Performance is assessed in terms of prediction accuracy, measured by mean absolute error (MAE), and computational runtime.

Figure~\ref{fig:boston_pred} (left) displays predicted house values against the observed responses for both methods. The fitted values are very similar, indicating that LAD-CD converges to solutions closely matching those obtained by the established LP-based solver. Figure~\ref{fig:boston_pred} (right) shows the evolution of the MAE across LAD-CD iterations. The loss decreases monotonically and stabilizes at a level close to the MAE achieved by \texttt{quantreg}, consistent with the theoretical convergence properties of the algorithm.

\begin{figure}[!ht]
    \centering
    \includegraphics[width=0.45\linewidth]{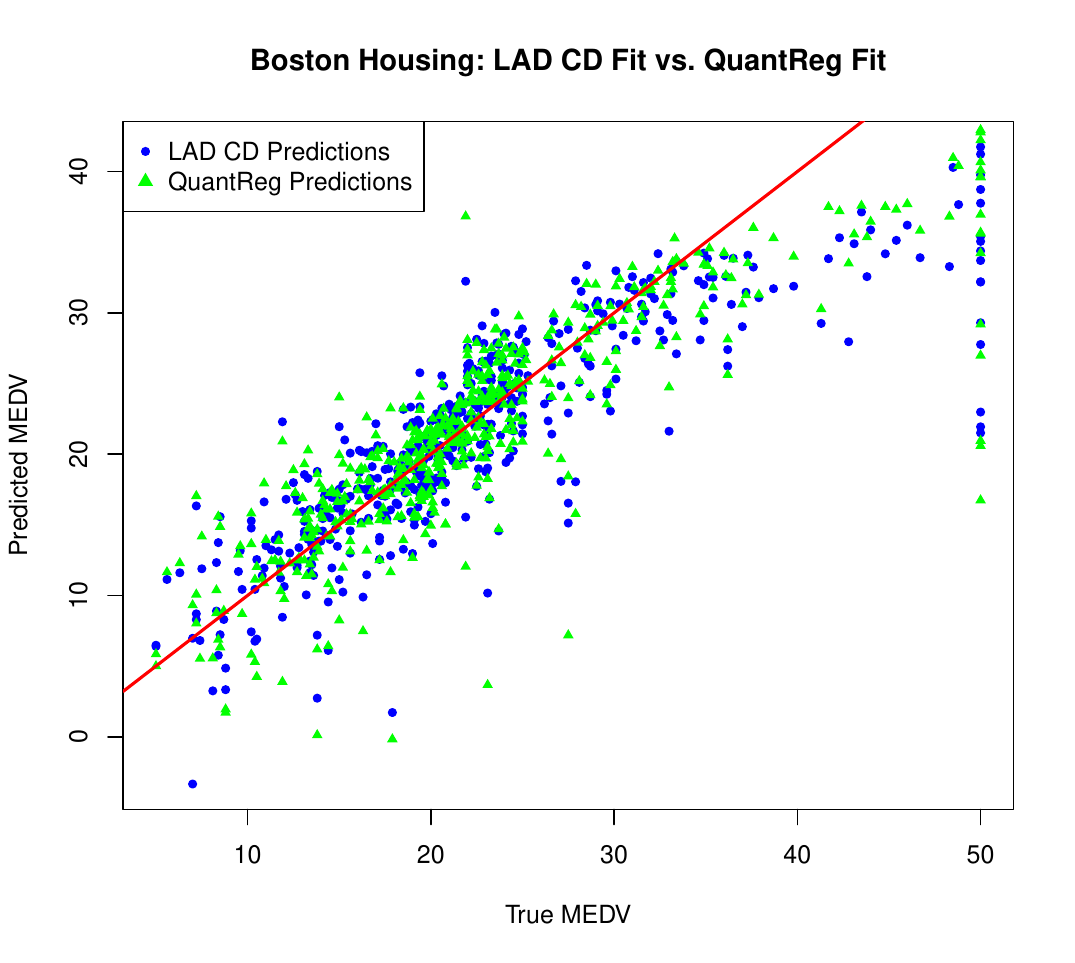}
    \hfill
    \includegraphics[width=0.45\linewidth]{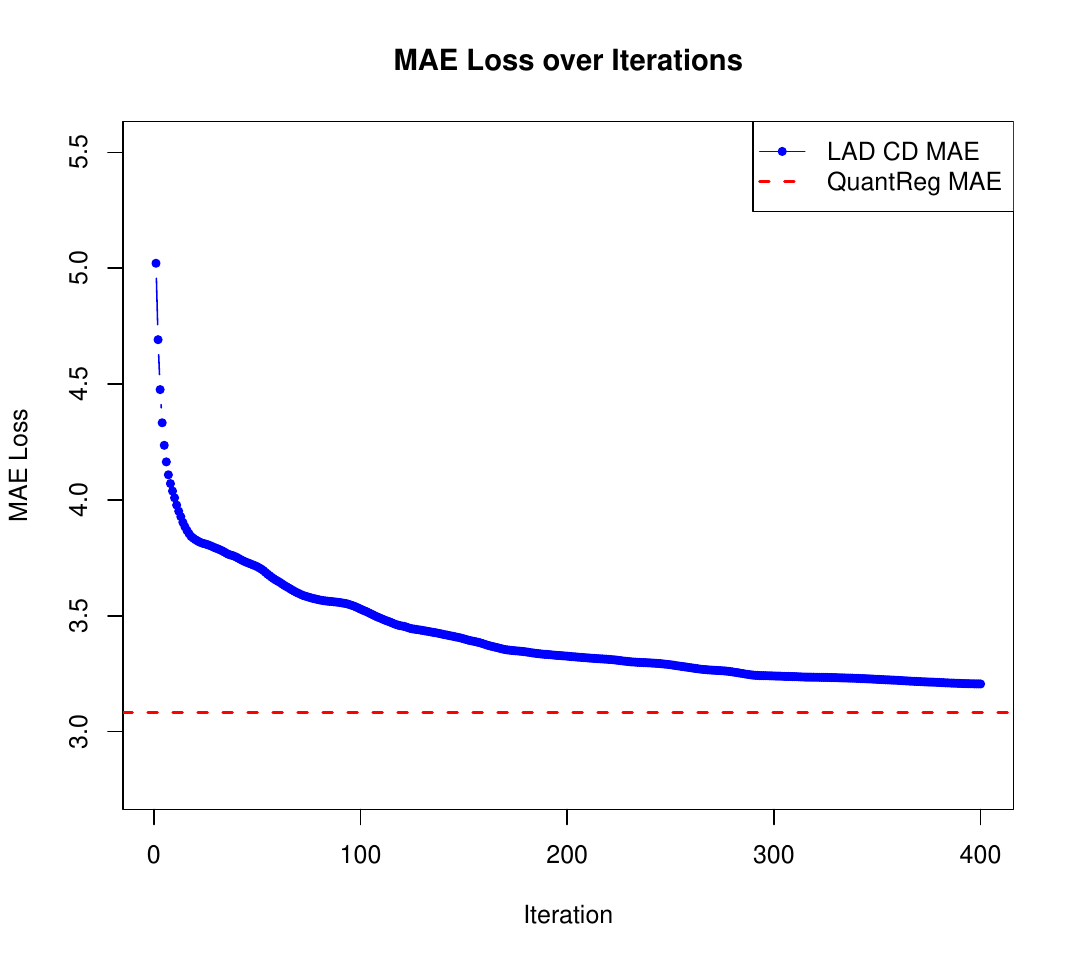}
    \caption{[Left] Predicted vs.\ true median house values for the Boston Housing dataset. Both LAD-CD (blue) and \texttt{quantreg} (green) produce similar fits, illustrating the consistency of our method. [Right] MAE loss over iterations for LAD-CD compared to \texttt{quantreg}. The coordinate descent method converges monotonically toward the optimal loss achieved by the LP-based solver.}
    \label{fig:boston_pred}
\end{figure}

Quantitative results are summarized in Table~\ref{tab:boston_comparison}. While the LP-based solver is faster on this relatively small dataset, LAD-CD achieves comparable predictive accuracy using a simple, matrix-inversion-free optimization procedure that does not rely on specialized linear programming routines.

\begin{table}[!ht]
\centering
\caption{Comparing (LAD-CD) and (QuantReg) on the Boston Housing dataset}
\label{tab:boston_comparison}
\begin{tabular}{|l|cc|c|}
\hline
 & MAE (loss) & Runtime (s) & $\sum_j |\hat\beta^{CD}_j - \hat\beta^{rq}_j|$ \\
\hline
LAD-CD & 3.53 & 0.115 & 30.24 \\
QuantReg (Barrodale--Roberts) & 3.08 & 0.002 & -- \\
\hline
\end{tabular}
\end{table}

\subsection{Air Quality: Comparison of LAD Coordinate Descent and Quantile Regression}
\label{sec:experiments:airquality}

We next evaluate the proposed method on the \texttt{airquality} dataset, which contains daily air quality measurements in New York, including ozone concentration, solar radiation, wind speed, and temperature. After removing observations with missing values, ozone concentration is modeled as the response variable using all available predictors. This dataset exhibits substantial variability and heavy-tailed residual behavior, making it a natural test case for LAD-based regression methods.

As in the Boston Housing experiment, we compare LAD regression fitted using the proposed coordinate descent algorithm (LAD-CD) with the classical implementation provided by the \texttt{quantreg} package, which solves the LAD problem via the \cite{BarrodaleRoberts1973} linear programming method. The primary goal of this experiment is to assess robustness and convergence behavior in a noisier real-world setting.

Figure~\ref{fig:air_pred} (left) compares predicted ozone concentrations from the two methods against the observed values. Despite noticeable scatter due to extreme observations, both approaches yield broadly similar predictions. Figure~\ref{fig:air_pred} (right) shows the evolution of the LAD-CD objective over iterations. The loss decreases monotonically and stabilizes at a level close to that achieved by the LP-based quantile regression solver.

\begin{figure}[!ht]
    \centering
    \includegraphics[width=0.45\linewidth]{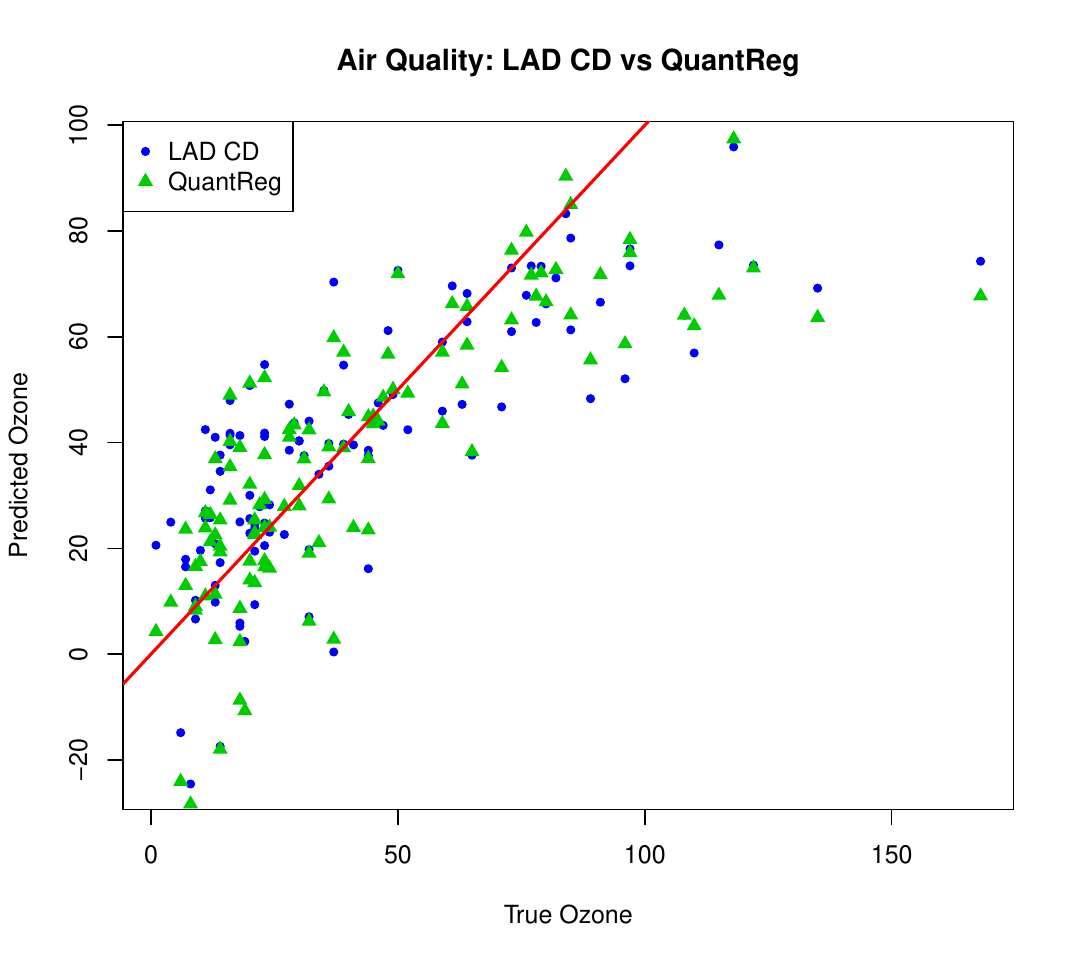}
    \hfill
    \includegraphics[width=0.45\linewidth]{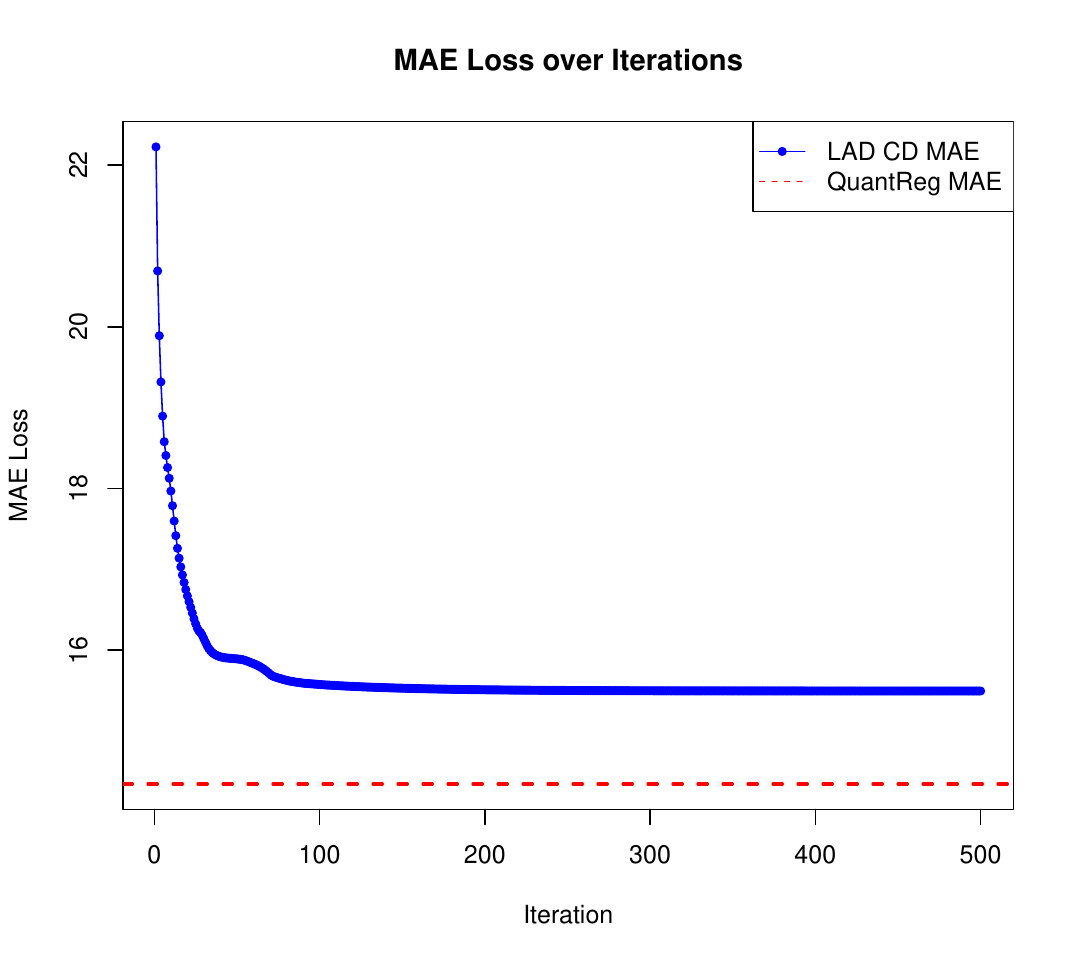}
    \caption{[Left] Predicted vs.\ true ozone levels for the Air Quality dataset. Both LAD-CD (blue) and QuantReg (green) yield broadly similar predictions, though the scatter reflects the heavy-tailed nature of the data. [Right] MAE loss over iterations for LAD-CD compared to QuantReg. The coordinate descent method converges monotonically, reaching a level close to the QuantReg baseline.}
    \label{fig:air_pred}
\end{figure}

Quantitative results are reported in Table~\ref{tab:airquality_comparison}. As in the Boston Housing dataset, LAD-CD attains predictive accuracy comparable to that of \texttt{quantreg}, albeit with higher computational cost on this relatively small dataset. The larger coefficient discrepancy observed between the two solutions reflects the increased variability and sensitivity to outliers in this dataset, though this difference does not materially affect predictive performance.

\begin{table}[!ht]
\centering
\caption{Comparison of LAD Coordinate Descent (LAD-CD) and Quantile Regression (QuantReg) on the Air Quality dataset.}
\label{tab:airquality_comparison}
\begin{tabular}{|l|cc|c|}
\hline
 & MAE (loss) & Runtime (s) & $\sum_j |\hat\beta^{CD}_j - \hat\beta^{rq}_j|$ \\
\hline
LAD-CD & 15.49 & 0.079 & 78.29 \\
QuantReg (Barrodale--Roberts) & 14.34 & 0.001 & -- \\
\hline
\end{tabular}
\end{table}

\subsection{Concrete Compressive Strength: Comparison of LAD Coordinate Descent and Quantile Regression}
\label{sec:experiments:concrete}

As a third real-world benchmark, we consider the \texttt{Concrete Compressive Strength} dataset from the UCI repository. The dataset consists of 1030 observations with eight covariates describing the composition of concrete mixtures, including cement, water, slag, fly ash, coarse and fine aggregates, superplasticizer, and curing age. The response variable is compressive strength, measured in MPa. Owing to heteroskedasticity and the presence of outliers, this dataset is commonly used to evaluate robust regression methods.

We compare LAD regression fitted using the proposed coordinate descent algorithm (LAD-CD) with the classical implementation provided by the \texttt{quantreg} package, which solves the LAD problem via the Barrodale--Roberts linear programming algorithm. Both methods yield similar predictive performance, with prediction MAEs of $8.14$ for LAD-CD and $8.05$ for QuantReg. As in previous experiments, the LP-based solver is faster on this moderate-sized dataset, while LAD-CD remains computationally feasible and straightforward to implement.

Table~\ref{tab:concrete_comparison} reports the numerical results, and Figure~\ref{fig:concrete_pred} illustrates predicted versus observed compressive strengths along with the convergence behavior of LAD-CD. Although the coefficient discrepancy between the two solutions is larger in this setting, reflecting increased variability in the predictors, the predictive behavior of the two methods remains closely aligned.

\begin{table}[H]
\centering
\caption{Comparison of LAD Coordinate Descent (LAD-CD) and Quantile Regression (QuantReg) on the Concrete Compressive Strength dataset.}
\label{tab:concrete_comparison}
\begin{tabular}{|l|cc|c|}
\hline
 & MAE (loss) & Runtime (s) & $\sum_j |\hat\beta^{CD}_j - \hat\beta^{rq}_j|$ \\
\hline
LAD-CD & 8.14 & 0.258 & 98.91 \\
QuantReg (Barrodale--Roberts) & 8.05 & 0.003 & -- \\
\hline
\end{tabular}
\end{table}

\begin{figure}[!ht]
    \centering
    \includegraphics[width=0.45\linewidth]{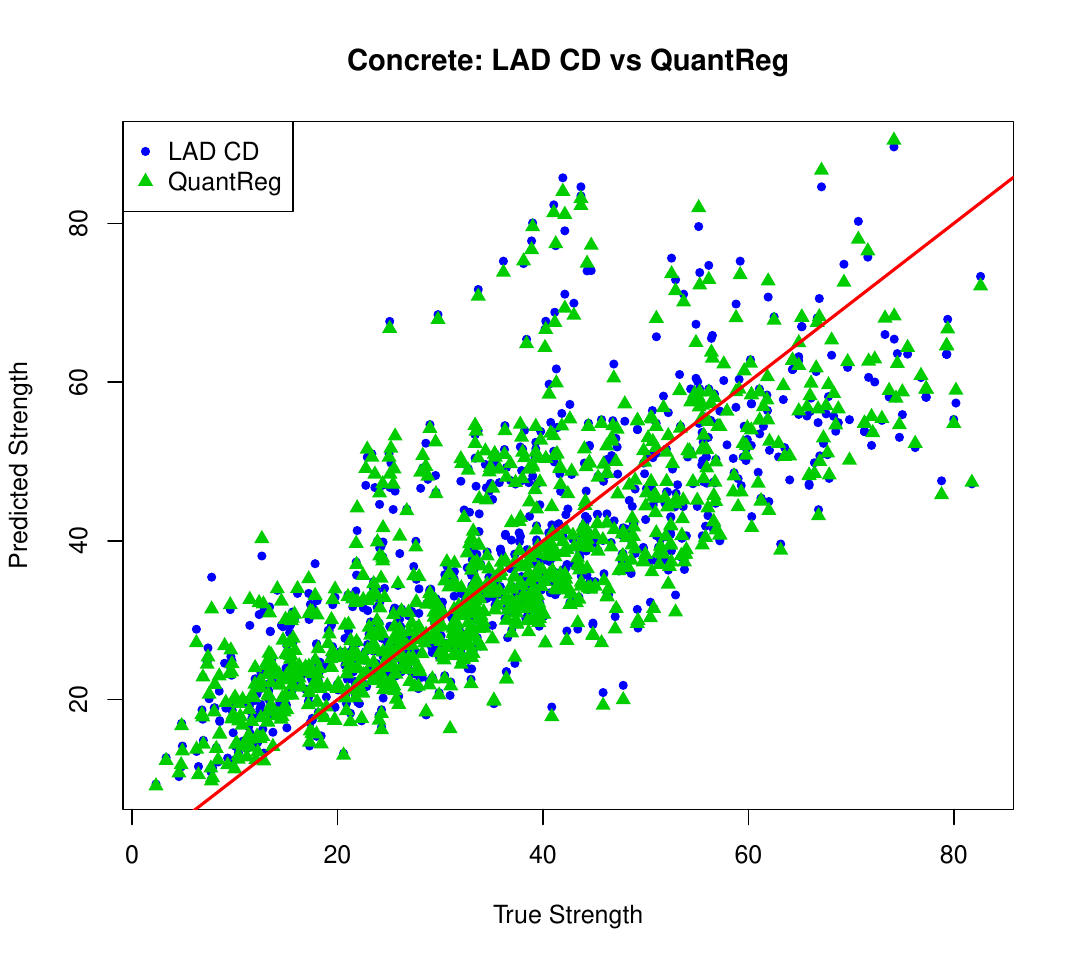}
    \hfill
    \includegraphics[width=0.45\linewidth]{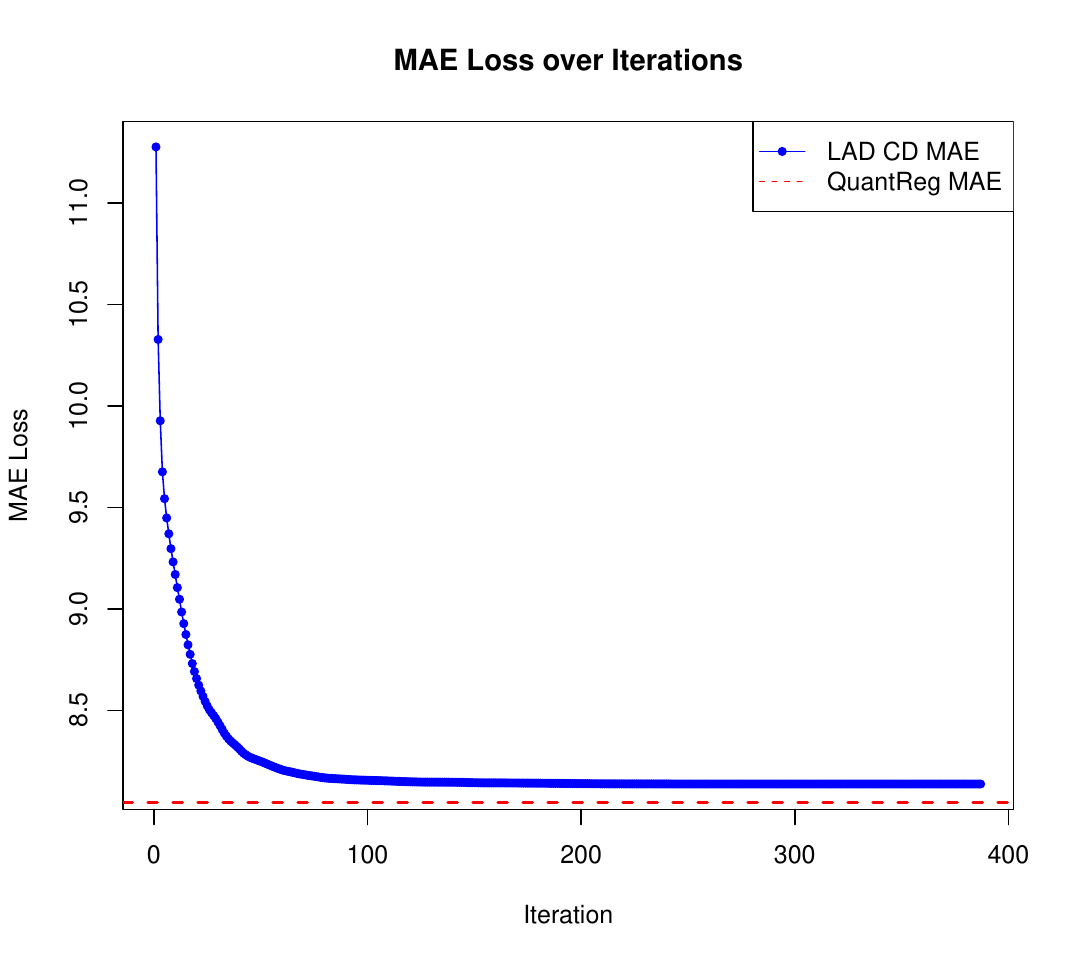}
    \caption{[Left] Predicted vs.\ true compressive strengths for the Concrete dataset. Both LAD-CD (blue) and QuantReg (green) yield similar predictive accuracy. [Right] MAE loss over iterations for LAD-CD compared to QuantReg. The coordinate descent method converges monotonically toward the quantile regression baseline.}
    \label{fig:concrete_pred}
\end{figure}

\section{Conclusion}
\label{sec:conclusion}

We proposed a simple and scalable coordinate descent algorithm for Least Absolute Deviations (LAD) regression. The method exploits the structure of the LAD objective: each coordinate update reduces to an exact one-dimensional minimization solved by a median or weighted median. Combined with incremental residual updates, this yields a numerically lightweight algorithm that is provably descent-based and easy to implement.

Our results highlight several key findings. First, LAD-CD provides robust estimation under heavy-tailed noise and outlier contamination, matching the predictive accuracy of LP-based quantile regression solvers when applicable and substantially outperforming ordinary least squares in contaminated settings. Second, unlike LP-based approaches, LAD-CD remains well-defined and stable in high-dimensional and underdetermined regimes ($p \ge n$), as it avoids matrix inversion entirely. Third, the optimized implementation reduces the cost of a full coordinate sweep from $O(np^2)$ to $O(p n \log n)$, yielding substantial practical speedups for large $p$ while retaining monotone convergence.

Initialization plays a central role in finite-sample performance. Across all experiments, ridge regression provided the most effective warm start, consistently delivering the best trade-off between runtime and predictive accuracy. Genetic-algorithm initializations were occasionally useful in small or highly nonconvex settings. Still, they scaled poorly with dimension, while random multi-start strategies offered a simple fallback when closed-form initializations were unavailable.

\paragraph{Practical guidance.}
For problems with $p < n$, LP-based quantile regression remains a fast and reliable baseline. In higher-dimensional settings, we recommend ridge-initialized LAD-CD as a default choice: it is stable, robust, and computationally efficient. Random multi-starts can be used when ridge is unavailable, while GA-based warm starts should be reserved for small-scale problems where global exploration is essential.

\paragraph{Limitations and future work.}
Several extensions merit further study. Computationally, linear-time or approximate median updates, parallel coordinate schemes, and lower-level implementations could further reduce runtime. Methodologically, incorporating explicit regularization (e.g., $\ell_1$ or $\ell_2$ penalties) within the coordinate updates would broaden applicability. On the theoretical side, finite-sample guarantees for ridge-initialized LAD-CD in high dimensions and efficiency comparisons with quantile regression under misspecified noise models remain open problems.

We emphasize that neither the formulation of the algorithm nor the convergence analysis relies on distributional assumptions for the error terms beyond those implicit in the LAD objective. In particular, no finite-variance condition is required for the algorithmic guarantees. Future work may investigate the statistical performance of the proposed method under minimal assumptions on the errors, such as independence and zero conditional median, without imposing moment conditions. Establishing finite-sample guarantees under such nonparametric settings remains an interesting direction for further study.

\paragraph{Final remark.}
LAD-CD offers a practical alternative for robust linear regression: it is modular, solver-free, scalable to high dimensions, and, when paired with a simple ridge warm start, delivers consistently strong empirical performance. We expect it to be useful both as a teaching tool and as a robust baseline for applied regression problems.

\paragraph{Data Availability Statement:}
The data and code that support the findings of this study are available at the following \href{https://github.com/Zehaan22/Coordinate_Descent_for_LAD/}{GitHub repository}.

\bibliography{references}

@article{charnes1955,
  title={Optimal estimation of executive compensation by linear programming},
  author={Charnes, A. and Cooper, W. W. and Ferguson, R. O.},
  journal={Management Science},
  volume={1},
  number={2},
  pages={138--151},
  year={1955},
  publisher={INFORMS}
}

@article{koenker1978,
  title={Regression quantiles},
  author={Koenker, Roger and Bassett, Gilbert, Jr},
  journal={Econometrica: Journal of the Econometric Society},
  pages={33--50},
  year={1978},
  publisher={JSTOR}
}

@article{BarrodaleRoberts1973,
  author  = {Iain Barrodale and F. D. K. Roberts},
  title   = {An Improved Algorithm for Discrete $L_{1}$ Linear Approximation},
  journal = {SIAM Journal on Numerical Analysis},
  year    = {1973},
  volume  = {10},
  number  = {5},
  pages   = {839--848},
  doi     = {10.1137/0710069}
}

@article{KoenkerDOrey1987,
  author  = {Roger W. Koenker and Vasco d'Orey},
  title   = {Computing Regression Quantiles},
  journal = {Journal of the Royal Statistical Society. Series C (Applied Statistics)},
  year    = {1987},
  volume  = {36},
  number  = {3},
  pages   = {383--393},
  doi     = {10.2307/2347802}
}

@book{Holland1975,
  author    = {John H. Holland},
  title     = {Adaptation in Natural and Artificial Systems},
  year      = {1975},
  publisher = {University of Michigan Press},
  address   = {Ann Arbor, MI}
}

@book{Goldberg1989,
  author    = {David E. Goldberg},
  title     = {Genetic Algorithms in Search, Optimization and Machine Learning},
  year      = {1989},
  publisher = {Addison-Wesley},
  address   = {Reading, MA}
}

@phdthesis{DeJong1975,
  author       = {K. A. De Jong},
  title        = {An Analysis of the Behavior of a Class of Genetic Adaptive Systems},
  year         = {1975},
  school       = {University of Michigan},
  note         = {Ph.D. thesis}
}

@book{Vose1999,
  author    = {Michael D. Vose},
  title     = {The Simple Genetic Algorithm: Foundations and Theory},
  year      = {1999},
  publisher = {MIT Press},
  address   = {Cambridge, MA}
}

@article{Friedman2010,
  author = {Friedman, J. and Hastie, T. and Tibshirani, R.},
  title = {Regularization Paths for Generalized Linear Models via Coordinate Descent},
  journal = {Journal of Statistical Software},
  year = {2010},
  volume = {33},
  number = {1},
  pages = {1--22}
}

@article{HoerlKennard1970,
  title = {Ridge Regression: Biased Estimation for Nonorthogonal Problems},
  author = {Hoerl, A. E. and Kennard, R. W.},
  journal = {Technometrics},
  year = {1970},
  volume = {12},
  number = {1},
  pages = {55--67},
  doi = {10.1080/00401706.1970.10488634}
}

@inproceedings{JamesStein1961,
  title = {Estimation with Quadratic Loss},
  author = {James, W. and Stein, C.},
  booktitle = {Proceedings of the Fourth Berkeley Symposium on Mathematical Statistics and Probability, Volume 1: Contributions to the Theory of Statistics},
  year = {1961},
  pages = {361--379},
  publisher = {University of California Press}
}

@article{EfronMorris1977,
  title = {Stein's Estimation Rule and Its Competitors—An Empirical Bayes Approach},
  author = {Efron, B. and Morris, C.},
  journal = {Journal of the American Statistical Association},
  year = {1977},
  volume = {68},
  number = {341},
  pages = {117--130},
  doi = {10.1080/01621459.1973.10482467}
}

@article{Tseng2001,
  author    = {Paul Tseng},
  title     = {Convergence of a Block Coordinate Descent Method for Nondifferentiable Minimization},
  journal   = {Journal of Optimization Theory and Applications},
  year      = {2001},
  volume    = {109},
  number    = {3},
  pages     = {475--494},
  doi       = {10.1023/A:1017501703105}
}

@book{hastie2009,
  title     = {The Elements of Statistical Learning: Data Mining, Inference, and Prediction},
  author    = {Hastie, Trevor and Tibshirani, Robert and Friedman, Jerome},
  year      = {2009},
  edition   = {2},
  publisher = {Springer},
  address   = {New York}
}

@book{koenker2005,
  title     = {Quantile Regression},
  author    = {Koenker, Roger},
  year      = {2005},
  publisher = {Cambridge University Press},
  address   = {Cambridge}
}

@article{portnoy1997,
  title   = {Asymptotic behavior of regression quantiles in high dimensions},
  author  = {Portnoy, Stephen},
  journal = {Journal of Multivariate Analysis},
  volume  = {62},
  number  = {2},
  pages   = {239--254},
  year    = {1997}
}

@book{huber2009,
  title     = {Robust Statistics},
  author    = {Huber, Peter J. and Ronchetti, Elvezio M.},
  publisher = {Wiley},
  year      = {2009}
}

@article{yeh1998,
  title   = {Modeling of strength of high-performance concrete using artificial neural networks},
  author  = {Yeh, I.-Cheng},
  journal = {Cement and Concrete Research},
  volume  = {28},
  number  = {12},
  pages   = {1797--1808},
  year    = {1998}
}
\bibliographystyle{plainnat}

\section*{Appendix}
\subsection*{Proofs for descent property}
\label{app:convergence}

\begin{proof}[Proof of Lemma~\ref{lem: Convex LAD}]
The function $t \mapsto |t|$ is convex, and each residual $r_i(\beta) = y_i - x_i^\top \beta$ is affine in $\beta$. The composition of a convex function with an affine map is convex, and a nonnegative sum of convex functions is convex. Hence $L$ is convex.
\end{proof}

\begin{proof}[Proof of Theorem~\ref{THM: LAD_CD-Convergence}]
Let $\beta^{k,j}$ denote the parameter vector after updating coordinate $j$ in iteration $k$. Since each update solves $\beta_j^{k+1} = \arg\min_{t} g_j(t)$ exactly, we have
\[
\mathcal{L}(\beta^{k,j}) \leq \mathcal{L}(\beta^{k,j-1}), \qquad j = 0, \dots, p.
\]
Chaining over all $p$ coordinate updates within an iteration yields
\[
\mathcal{L}(\beta^{k+1}) = \mathcal{L}(\beta^{k,p}) \leq \dots \leq \mathcal{L}(\beta^{k,0}) = \mathcal{L}(\beta^{k}).
\]
Thus, the sequence $\{\mathcal{L}(\beta^{k})\}$ is monotonically non-increasing and bounded below by zero, and therefore convergent. Since $L$ is convex and each coordinate update yields an exact minimizer of the corresponding subproblem, any limit point $\beta^{\star}$ must be a global minimizer of $L$ \citep[see][]{Tseng2001}.
\end{proof}

\subsection*{Additional results}
\begin{table}[!ht]
\centering
\caption{Final prediction MAE of LAD--CD for $p = 100$ under varying sample sizes. 
Results are reported for warm-start initialization (true value + gaussian perturbation) and uninformed (zero) initialization.}
\label{tab:initial_mae_scaling}
\begin{tabular}{ccc}
\toprule
\textbf{$n$} & \textbf{Warm Start MAE} & \textbf{Uninformed Start MAE} \\
\midrule
10    & 0.0893 & 1.8804 \\
20    & 1.6874 & 0.6852 \\
50    & 1.4514 & 3.3592 \\
100   & 1.4361 & 5.1271 \\
200   & 0.7401 & 1.2320 \\
500   & 0.6986 & 0.7000 \\
1000  & 0.7002 & 0.7011 \\
10000 & 0.7982 & 0.7979 \\
\bottomrule
\end{tabular}
\end{table}

\end{document}